\documentclass[pdftex,hidelinks,
final
]{dmtcs-episciences}

\usepackage[utf8]{inputenc}
\usepackage{subfigure}

\usepackage[round]{natbib}
\usepackage{xcolor}
\usepackage{comment}
\usepackage{amsfonts}
\usepackage{mathtools}
\usepackage{amsthm}
\usepackage{amsmath, amssymb, bm}
\usepackage[linesnumbered,ruled,noend]{algorithm2e}
\usepackage[title,toc,titletoc,header]{appendix}
\DeclareMathOperator*{\argmax}{arg\,max}

\newcommand{\then}{\Longrightarrow}
\usepackage[symbol]{footmisc}
\newtheorem{definition}{Definition}
\newtheorem{lemma}{Lemma}
\newtheorem{theorem}{Theorem}
\newtheorem{corollary}{Corollary}

\author[I. Sigalas et al.]{Ioannis Sigalas \and
    Nikolaos Lazaropoulos \and
    Ioannis Lamprou\\ \and
	Ioannis Vaxevanakis\and
	Vassilis Zissimopoulos}

\title{Domination and Coverage Problems under Vulnerability Constraints}
\affiliation{Department of Informatics and Telecommunications, National and Kapodistrian University of Athens, Greece}
\keywords{Approximation Algorithms,  Coverage, Influence Maximization, Vulnerable, Domination.}
\begin{document}
\maketitle
\begin{abstract}
 In various domination and coverage problems, certain vertices or edges should not be dominated/covered and are designated as vulnerable. Motivated by this, we define the \textit{$k$-Vertex Maximum Domination Ratio with Vulnerable Vertices} ($k\textit{-}Max \ \mathit{DRVV}$) problem, which extends the budgeted dominating set problem to include vulnerability constraints. We propose an approximation algorithm based on an unbudgeted variant of $k\textit{-}Max \ \mathit{DRVV}$, termed the \textit{Maximum Domination Ratio with Vulnerable Vertices} ($\mathit{DRVV}$) problem. For bounded-degree graphs of order $n$, our algorithm provides an $O(k/n)$-approximation for the $k\textit{-}Max \ \mathit{DRVV}$ problem. We introduce the Dominating Set with Vulnerable Vertices ($\mathit{DSV}$) problem, reduce it to the Red\textit{-}Blue Set Cover problem, and derive a $2\sqrt{|V|\cdot(H(\Delta_{N})-\frac{1}{2}})$-approximation algorithm, where $|V|$ is the order of the graph, $\Delta_N$ is the maximum degree among non-vulnerable vertices and $H$ is the harmonic function. 
Finally, we examine the \textit{Vertex Cover with Vulnerable Edges} ($\mathit{VCVE}$) problem, which can be naturally expressed as a special case of the Red\textit{-}Blue Set Cover problem. 
We present a polynomial-time $2$-approximation algorithm for the $VCVE$ problem, achieving the best possible ratio.
\end{abstract}

\section{Introduction}
\label{sec:intro} 

Domination and Coverage problems have been studied extensively in the literature~\cite{haynes2013fundamentals,haynes2020topics,Vazirani:2001}. In this work, we focus on variants of these problems with constraints regarding vulnerable vertices or edges. These formulations are closely connected to the classical Red\textit{-}Blue Set Cover problem \cite{carr1999,PELEG200755} and are motivated by applications in influence maximization, where the diffusion cascade model is typically used to capture how ideas or behaviors spread in networks while respecting vulnerable elements. In contrast to cascade-based formulations, our perspective abstracts the diffusion process into structural graph properties, aiming to maximize the number of dominated vertices or covered edges while controlling the number of vulnerable ones.

Influence maximization aims to select a set of seed vertices whose activation triggers the largest spread of ideas or behaviors throughout the network. There has been significant research \cite{pasumarthi2015near,wen2018maximizing} focusing on optimizing the influence on a specific target audience. Socially responsible influence maximization \cite{CHEN202184} further emphasizes minimizing influence on vulnerable vertices. A common metric to evaluate this trade-off is the Additively Smoothed Ratio ($ASR$), the fraction of expected influenced non-vulnerable vertices over vulnerable ones. We, on the other hand, focus on combinatorial formulations that represent vulnerability constraints directly on the graph.

Building on this perspective, we define the $k$-Vertex Maximum Domination Ratio with Vulnerable Vertices ($k\textit{-}Max \ \mathit{DRVV}$), which maximizes the ratio of dominated non-vulnerable over vulnerable vertices.  This formulation offers a clear combinatorial abstraction of socially responsible influence maximization and provides a framework in which vulnerability constraints can be studied through domination and coverage problems. Vertex Cover with vulnerable edges ($\mathit{VCVE}$) is an example of coverage problem with vulnerable edges. 

Beyond applications in social networks, the concept of Red\textit{-}Blue vertices (or vulnerable and non-vulnerable vertices in our formulation) extends to a variety of other domains. For instance, data mining provides an important motivation, as illustrated by \cite{carr1999} in the context of detecting fraudulent claims within extensive datasets of legitimate Medicare records. Furthermore, the proposed framework may contribute to advancements in machine learning by enabling the prioritization of salient information and facilitating the enrichment of training datasets with data of higher relevance.

\subsection{Related Work}
\label{sec:relatedWork} 
The Dominating Set problem is well studied, and numerous variations have been introduced. One such budgeted variant is the k-Vertex Maximum Domination ($k\textit{-}Max \ VD$) problem, defined as follows: Given a graph $G$, find a set with at most $k$ vertices that maximizes the number of dominated vertices. \cite{miyano2011maximum} provide a $(1-\frac{1}{e})$-approximation algorithm. In the connected setting, the analogous problem is the budgeted connected dominating set problem ($BCDS$): Given a graph $G$, find a set with at most $k$ vertices that induce a connected subgraph and maximize the number of dominated vertices. \cite{khuller2013} provide a $\frac{1}{13}(1-\frac{1}{e})$-approximation algorithm, which was gradually improved up to  $\frac{1}{6}(1-\frac{1}{e})$ \cite{IBCD-IWOCA,AnalyzingtheOptimal,LAMPROU20211,10631020}. 

In the direction of Influence Maximization,  \cite{CHEN202184} deal with the Independent Cascade Model under the $ASR$ metric. Given a graph $G=(V, E)$  where vertices are partitioned into non-vulnerable $V_N$ and vulnerable $V_L$, the $ASR$ metric of a subset of vertices $S \subseteq V_N$ is $ASR(S,c) =\frac{\sigma_N(S) + c}{\sigma_L(S) + c}$. The set $S$ is called a seed-set and $\sigma_N(S)$ and $\sigma_L(S)$  are the expected number of influenced non-vulnerable vertices and the expected number of vulnerable vertices, respectively. A positive constant $c>0$ is introduced primarily to allow $ASR$ to be applied even when the seed set contains no vulnerable vertices. The choice of $c$ influences the outcome, as discussed in \cite{CHEN202184}; specifically, increasing the value of $c$ results in a larger output seed set. $c$ is part of the input of any specific instance. The \textit{ASR-Maximizing Seed-set problem} aims to identify a seed set containing no more than $k$ vertices that maximizes the $ASR$ metric. Note that $ASR$ is a non-monotone, non-submodular, and non-supermodular function. The authors propose an algorithm for the problem, that guarantees $\mathbb{E}[ASR(S,c)] \geq \frac{\sigma_L(S^*)+c}{|V|+c}(1-\frac{1}{e})ASR(S^*,c)$, where $S^*$ is an optimal solution of size at most $k$ with respect to $ASR$, and $\mathbb{E}[ASR(S,c)]$ denotes the expectation of $ASR$ over every solution $S$ constructed by their algorithm.

The Red\textit{-}Blue Set Cover ($RBSC$) problem \cite{carr1999} is a natural generalization of the Set Cover problem with vulnerability constraints. In $RBSC$, the ground set is partitioned into “blue” elements, which must be covered, and “red” elements, whose coverage should be minimized. Formally, given a universe $U = R \cup B$ (with $R \cap B = \emptyset$), the goal is to select a subcollection of sets that covers all blue elements while covering as few red elements as possible. In \cite{carr1999,Elkin2000}, it is shown that it is quasi-NP-hard to approximate the Red-Blue Set Cover problem with ratio $2^{\log^\varepsilon|U|}$, for any $0<\varepsilon<1$, and \cite{PELEG200755} provided a $2\sqrt{|U|\cdot \log|B|}$-approximation algorithm for $RBSC$.

\subsection{Contribution}\label{sec:contribution} 

This work investigates domination and coverage problems involving vulnerable vertices or edges. We define the \textit{k-Vertex Maximum Domination Ratio with Vulnerable Vertices} ($k\textit{-}Max \ \mathit{DRVV}$) problem. In order to provide an approximation algorithm for $k\textit{-}Max \ \mathit{DRVV}$, we define a version without the $k$ restriction, named the \textit{Maximum Domination Ratio with Vulnerable Vertices} ($\mathit{DRVV}$) problem. We prove that $\mathit{DRVV}$ is NP-hard by reducing its decision version from a special case of $3$-$\mathcal{SAT}$. This implies hardness
of the $k\textit{-}Max \ \mathit{DRVV}$. Next, we define the \textit{Maximum Domination with a Bounded Number of Vulnerable Vertices} ($\mathit{DVV}$) problem and provide the first $\frac{1}{\Delta+1} \cdot (1 - e^{-\frac{1}{\Delta}})$-approximation ratio, where $\Delta$ is the maximum degree of the graph. Using $\mathit{DVV}$, we propose an algorithm for $\mathit{DRVV}$ and prove that it achieves the same approximation ratio as $\mathit{DVV}$. Then, we provide an approximation algorithm for $k\textit{-}Max \ \mathit{DRVV}$ by restricting the solution of $\mathit{DRVV}$ to $k$ vertices using the Maximum Coverage Problem. This algorithm guarantees a $\frac{\rho \cdot r}{d}$ approximation ratio for the $k\textit{-}Max \ \mathit{DRVV}$ problem, where $\rho$ is the approximation ratio for $Max \ CP$, $r$ is the approximation ratio for $\mathit{DRVV}$, $d=\lceil m /k \rceil$, $k \leq m < n$ and $m=|S|$ with $S$ the solution provided by the algorithm $\mathit{DRVV}$. For the class of graphs with bounded maximum degree, the approximation ratio becomes $O(k/n)$. Additionally, we define the Dominating Set with Vulnerable Vertices ($\mathit{DSV}$) problem and show how it can be reduced to Red-Blue Set Cover. Using this reduction, we obtain a polynomial-time algorithm with a $2\sqrt{|V|\cdot(H(\Delta_{N})-\frac{1}{2}})$-approximation ratio, where $|V|$ is the order of the graph, $\Delta_N$ is the maximum degree among non-vulnerable vertices and $H$ is the harmonic function. Also, we consider the Vertex Cover problem with vulnerability constraints and define the Vertex Cover problem with vulnerable edges ($\mathit{VCVE}$). We propose an approximation algorithm that achieves a $4$-approximation guarantee and an LP rounding scheme further improving it to $2$-approximation guarantee.

This paper is organized as follows.
Section~\ref{sec:k-drvv} introduces the $k$-Vertex Maximum Domination Ratio with Vulnerable Vertices ($k\textit{-}Max~\mathit{DRVV}$) problem, establishes NP-hardness for the unbudgeted $DRVV$ problem, and develops approximation algorithms via the auxiliary $\mathit{DRVV}$ and $\mathit{DVV}$ formulations.
Section~\ref{sec:dsv} studies the Dominating Set with Vulnerable Vertices ($\mathit{DSV}$) problem and provides complexity and approximation results.
Section~\ref{sec:vcve} defines the Vertex Cover with Vulnerable Edges ($\mathit{VCVE}$) problem and presents a $2$-approximation algorithm.

\section{\texorpdfstring {$k$-Vertex Maximum Domination Ratio with Vulnerable Vertices ($k\textit{-}Max \ \mathit{DRVV}$)}{k-Vertex Maximum Domination Ratio with Vulnerable Vertices (k-Max DRVV)}}
\label{sec:k-drvv} 
Inspired by the problem defined in \cite{CHEN202184}, we propose a variant termed the $k$-Vertex Maximum Domination Ratio with Vulnerable Vertices ($k\textit{-}Max \ \mathit{DRVV}$).
In contrast to \cite{CHEN202184}, which uses the Independent Cascade Model to estimate expected influence, our work focuses on maximizing the domination ratio on a deterministic graph (see Fig.~\ref{fig:drvv}).

\begin{figure}
    	\centering
    	\includegraphics[scale=1.3]{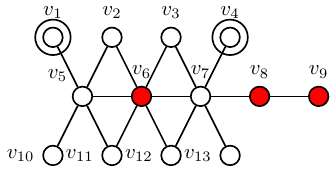}
    	\caption{A $k\textit{-}Max \ \mathit{DRVV}$ instance. Red vertices $v_6,v_8,v_9$ are vulnerable while the other ones are all non vulnerable. The circled vertices define a solution $S=\{v_1, v_4\}$ for $k=2$. For this instance $ASR(S, c)=\frac{4+c}{c}$}.
    	\label{fig:drvv} 
    \end{figure}
    
We define $ASR$ as follows: Given a graph $G=(V,E)$  where vertices are partitioned into non-vulnerable $V_N$ and vulnerable $V_L$, the  $ASR$ metric of a subset of vertices $S \subseteq V_N$ is $ASR(S,c) =\frac{\sigma_N(S) + c}{\sigma_L(S) + c}$, where $c$ is a positive constant, $\sigma_N(S)$ is the count of non-vulnerable vertices in the closed neighborhood of $S$, and  $\sigma_L(S)$ is the count of vulnerable vertices in the  neighborhood of $S$. The closed neighborhood of $S$ is the union of $S$ and the set of adjacent vertices to $S$. With these in mind, we proceed to the problem definition:

\begin{definition}[$k\textit{-}Max \ \mathit{DRVV}$]
Given a graph $G=(V, E)$  where vertices are partitioned into non-vulnerable $V_N$ and vulnerable $V_L$, an integer $k \leq |V_N|$ and $c \in \mathbb{R}^+$ a constant, find a subset $S\subseteq V_N$ such that $ASR(S,c)$ is maximized and $|S| \leq k$.
\end{definition}

\subsection{\texorpdfstring {Maximum Domination Ratio with Vulnerable Vertices ($\mathit{DRVV}$)}{Maximum Domination Ratio with Vulnerable Vertices (DRVV)}}
\label{sec:drvv} 
In this subsection, we introduce the Maximum Domination Ratio with Vulnerable Vertices Problem, as an unbudgeted version of $k\textit{-}Max \ \mathit{DRVV}$. First, we prove that $\mathit{DRVV}$ is NP-hard by reduction from the $3$-$\mathcal{SAT}_{equal}$ \cite{lamprou2021maximum}. Next, we define the Maximum Domination with a Bounded Number of Vulnerable Vertices problem, we prove its hardness and present an approximation algorithm for this problem. Finally, we use the result for $\mathit{DVV}$ to obtain an approximation result for $\mathit{DRVV}$.

\begin{definition}[$\mathit{DRVV}$]
	Given a graph $G=(V, E)$  where vertices are partitioned into non-vulnerable $V_N$ and vulnerable $V_L$, $c \in \mathbb{R}^+$ a constant, find a subset $S\subseteq V_N$ such that $ASR(S,c)$ is maximized.
\end{definition}

To prove that $\mathit{DRVV}$ is NP-hard, we define its decision version and reduce the NP-complete problem $3$-$\mathcal{SAT}_{equal}$ to decision $\mathit{DRVV}$\cite{lamprou2021maximum}.

\begin{definition}[Decision $\mathit{DRVV}$]
	Given a graph $G=(V, E)$ where vertices are partitioned into non-vulnerable $V_N$ and vulnerable $V_L$ and a rational number $W$ and $c \in \mathbb{R}^+$ a constant, decide whether exists a subset of vertices $S\subseteq V_N$ such that it holds  $ASR(S,c)\geq W$.
\end{definition}

\begin{definition}[$3$-$\mathcal{SAT}_{equal}$ \cite{lamprou2021maximum}]
	Given a CNF formula $\phi$ with $n$ variables and $n$ clauses, where each clause is a disjunction of exactly 3 literals, decide whether $\phi$ is satisfiable.
\end{definition}

 \begin{figure}
    	\centering
    	\includegraphics[scale=1]{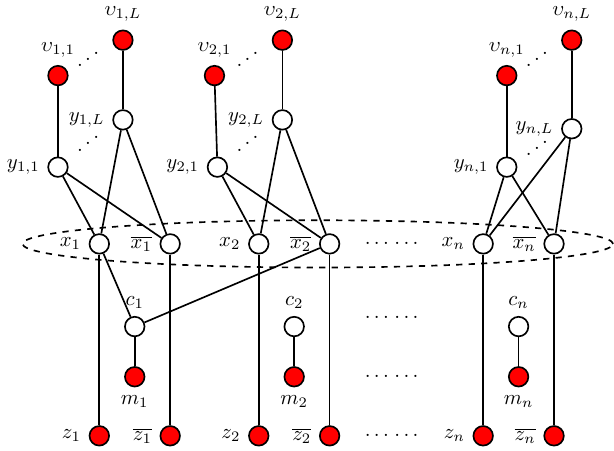}
    	\caption{The graph $G$ constructed for the reduction. The dashed ellipsis is a clique. All leaves (red vertices) are vulnerable.}
    	\label{fig:red} 
    \end{figure}
    
    \textbf{The Reduction.}
    Given a $3$-$\mathcal{SAT}_{equal}$ formula $\phi$, we create a graph $G$.
    Let $x_1,\overline{x_1}, x_2,\overline{x_2}, \ldots, x_n,\overline{x_n}$ stand for the variables of $\phi$ and $c_1, c_2, \ldots, c_n$ for the clauses of $\phi$ (see Fig.~\ref{fig:red}). We construct the graph $G$ in the following way:
    we place one vertex per literal $x_i, \overline{x_i}$ ($2n$ vertices in total), 
    one vertex per clause $c_i$ ($n$ vertices) and a set of $L$ vertices, where $L$ is $poly(n)$ (definition of $L$ follows later), for each variable (namely $y_{i j}$ for $j = 1, \ldots, L$) summing up to $L \cdot n $ vertices. We call the two vertices $x_i, \overline{x_i}$ a \emph{literal-pair} and each vertex $c_i$ a \emph{clause-vertex}. We add vulnerable vertices $z_i, \overline{z_i}$ connected to literals,  vulnerable vertices $m_i$ connected to clauses and each $y_{i j}$ vertex is connected to a vulnerable vertex $\upsilon_{i j}$.
    Then, we connect each literal vertex to \emph{all} the other literal vertices. Moreover, each literal-vertex is connected to all the corresponding clause-vertices where it appears in $\phi$. Finally, $x_i$ and $\overline{x_i}$ are connected to $y_{i j}$ for all $j$. The constant $c$ is part of the input and does not affect the proof. It is clear that the construction can be done in polynomial time.  

Before we start the proof, we provide some notations. The open neighborhood of $S$ is defined as $N(S) = \{ v \in V : \exists u \in S \text{ such that } (u,v) \in E \}$ and the closed neighborhood as $N[S]=N(S)\cup S$. Let 
    $C=\{c_1,...,c_n\}$ be the set of all clauses, 
    $X=\{x_1, \overline{x_1},...,x_n, \overline{x_n}\}$ the set of all variables, 
    $C(S) = \{v\in C: v\in N(S)\}$ the set of clauses that are adjacent to S, 
    $X(S) = \{v\in X: v\in N(S)\}$ the set of variables that are adjacent to S, 
    $C_S(x) = \{v\in C: v\in N(\{x\}) \ \wedge \ v\notin  C(S)\}$ the set of clauses that are adjacent to variable $x\in X$ and not adjacent to $S$, 
    $Y=\{y_{i j}\in V: \forall i,j\}$ the set of all $y_{i j}$ vertices as defined above and
    $Y_i = \{v\in V: v=y_{i j}, \ \forall j\}$ the subset of vertices $Y$ that are adjacent to variable $x_i\in X$.
    Let $\sigma_N(S)=|N[S]\cap V_N|$, $\sigma_L(S)=|N[S]\cap V_L|$ and 
    $ASR(S,c) = \frac{\sigma_N(S) + c}{\sigma_L(S) + c}, \ S\subseteq V \ and \ c \in \mathbb{R}^+$ a constant. We define $\bar{c}=\Big\lceil\max\{c,\frac{1}{c}\}\Big\rceil$, $L=\lceil(1+\bar{c}+2n\bar{c})n\rceil$ and $\boldsymbol{\mathcal{S}} = \{S\subseteq V: ASR(S,c)\geq ASR(S',c), \ \forall S'\subseteq V\}$ be the collection of all subsets of vertices $V$ that maximize the function  $ASR$. By definition, $|Y_i|=L$ for all $i\in \{1,\cdots,n\}$. 
    
    To prove the hardness result for $\mathit{DRVV}$, we first show that: if there exists a satisfiable assignment for $3$-$\mathcal{SAT}_{equal}$ then the value of $ASR(S,c)$ is greater than a specific value $W = \frac{Ln+3n+c}{n+c}$. 
            \begin{lemma}\label{lem:ASR>W}
    	If $\phi$ is satisfiable, there exists $S\subseteq V$ such that \resizebox{.26\textwidth}{!}{$ASR(S,c) \geq \frac{Ln+3n+c}{n+c}$}.
    \end{lemma}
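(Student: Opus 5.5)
Given a satisfying assignment $\tau$ of $\phi$, we take $S$ to be the set of the $n$ literal vertices made true by $\tau$: exactly one vertex from each literal-pair $\{x_i,\overline{x_i}\}$. All of these lie in $V_N$, so $S\subseteq V_N$. We then count $\sigma_N(S)$ and $\sigma_L(S)$ directly from the construction and compare $ASR(S,c)$ with $W$.

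\textbf{Non-vulnerable vertices.} The literal vertices form a clique, so $N[S]$ contains all $2n$ literal vertices. Every $x_i$ and $\overline{x_i}$ is adjacent to all of $Y_i$, so $N[S]$ contains all $Ln$ vertices of $Y$. Since $\tau$ satisfies $\phi$, every clause contains a true literal; that literal's vertex is in $S$ and is adjacent to the clause-vertex, so all $n$ clause-vertices are dominated. Hence $\sigma_N(S)\ge 2n+Ln+n=Ln+3n$, with equality because these are all the non-vulnerable vertices.

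\textbf{Vulnerable vertices.} The vertices $m_j$ hang off clause-vertices and $\upsilon_{ij}$ hang off $y_{ij}$, none of which are in $S$, so none of them lies in $N[S]$. The only vulnerable vertices adjacent to a literal vertex are the $z$-vertices, one per literal. Therefore $N[S]\cap V_L$ consists of exactly the $n$ vertices $z$ attached to the chosen literals, and $\sigma_L(S)=n$. We should check against the figure that each literal vertex has exactly one vulnerable pendant: $z_i$ for $x_i$ and $\overline{z_i}$ for $\overline{x_i}$.

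\textbf{Conclusion.} Substituting the two counts gives $ASR(S,c)=\frac{Ln+3n+c}{n+c}=W$, which proves the lemma.

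\textbf{Main obstacle.} The argument itself is short. The delicate point is being precise about the closed neighborhood: $\sigma_L$ counts only vulnerable vertices adjacent to $S$, not those two steps away through the clause-vertices or the $y_{ij}$. Confirming that the $m_j$ and $\upsilon_{ij}$ are not counted is exactly what makes $\sigma_L(S)=n$ rather than something larger.
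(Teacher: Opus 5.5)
Your proof is correct and follows the paper's argument: take $S$ to be the true literal vertices of a satisfying assignment, note that all $Ln+3n$ non-vulnerable vertices are dominated, and note that the only vulnerable vertices in $N[S]$ are the $n$ pendant $z$-vertices of the chosen literals, so $ASR(S,c)=\frac{Ln+3n+c}{n+c}$. The paper does the same count more tersely: it writes $\geq$ and justifies $\sigma_L(S)=|S|$ by noting that each non-vulnerable vertex has exactly one vulnerable neighbor.
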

 \begin{proof}
    Let $S=\{s_1,s_2,\cdots,s_n\}$ be a satisfiable assignment for $3$-$\mathcal{SAT}_{equal}$ problem, where $s_i=x_i \ or \ \overline{x_i}$, for every $  i=\{1,...,n\}$. By reduction, every clause $c_i\in C$ has at least one neighbor in $S$.
    Because every non-vulnerable vertex has only one vulnerable neighbor, $\sigma_L(S) = |S|$.  So,
    	\[
    	ASR(S,c)=\frac{\sigma_N(S) + c}{\sigma_L(S) + c}\geq\frac{|Y_i|\cdot n+|C(S)|+|X| + c}{|S|+ c} = \frac{Ln+3n+c}{n+c}
    	\] 
    \end{proof}
    
    We continue the proof by showing that, if there does not exists a satisfiable assignment for $3$-$\mathcal{SAT}_{equal}$ then the value of $ASR(S,c)$ is strictly lower than $W$. At first we show that a maximum size solution $S\in \boldsymbol{\mathcal{S}}$ contains only elements of $x_i \ or \  \overline{x_i}$, that is $S =\{s_1, s_2, ..., s_n\}$, where $s_i\in \{x_i,\overline{x_i}\}$ for all $i$.
    
     \begin{lemma}\label{lem:con}
   Let $S\in \boldsymbol{\mathcal{S}}$ be a subset of vertices that maximize the function  $ASR$. It holds $|S|=n$ and $S=\{s_1, s_2, \cdots, s_n\}$, where $s_i\in \{x_i,\overline{x_i}\}$ for all $i\in\{1,...,n\}$.
    \end{lemma}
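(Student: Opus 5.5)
The plan is an exchange argument. Fix $S\in\boldsymbol{\mathcal{S}}$ with $S\subseteq V_N$. I show that any $S$ violating the claimed structure admits a local modification that strictly increases $ASR$, which contradicts maximality.

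\textbf{Bookkeeping.} Every non-vulnerable vertex has exactly one vulnerable neighbour, and it is private: $z_i,\overline{z_i}$ for literals, $m_i$ for clauses, $\upsilon_{ij}$ for $y_{ij}$. Hence $\sigma_L(S)=|S|$ for every $S\subseteq V_N$, and $ASR(S,c)=\frac{\sigma_N(S)+c}{|S|+c}$. I will use repeatedly the elementary fact that for $a,b>0$ and $g\ge0$,
\[
\frac{a+g}{b+1}<\frac ab\quad\text{if and only if}\quad g<\frac ab .
\]

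\textbf{Step 1: $S$ contains a literal.} If $S\cap X=\emptyset$, each vertex of $S$ dominates at most $4$ non-vulnerable vertices: a clause plus its three literals, or $y_{ij}$ plus $x_i,\overline{x_i}$. Thus $ASR(S,c)\le \frac{4|S|+c}{|S|+c}\le 4+c$ (using $c\le 4+c$ when $S=\emptyset$). A single literal $x_1$ already dominates $X\cup Y_1$, giving $ASR\ge\frac{L+2n+c}{1+c}$. By the choice $L\ge n\bar c(1+2n)$, this exceeds $4+c$, so $S\cap X\neq\emptyset$.

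\textbf{Step 2: $S\subseteq X$.} Since $S$ contains a literal, all of $X$ is already dominated. A clause vertex in $S$ then contributes at most itself as new non-vulnerable vertex, and so does a $y_{ij}$. So removing such a vertex loses $g\le1$ non-vulnerable vertices and one vulnerable one. The current ratio is at least $L/(2n+c)$-ish, far above $1$, so by the fact above the removal strictly increases $ASR$. It is convenient to first prove the crude lower bound $ASR(S,c)\ge \frac{L+2n+c}{1+c}$ from maximality and reuse it in every step.

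\textbf{Step 3: $|S\cap\{x_i,\overline{x_i}\}|\le1$.} If both $x_i$ and $\overline{x_i}$ lie in $S$, remove one of them. $Y_i$ and $X$ stay dominated, so the only possible loss is clause vertices, at most $n$ of them. We therefore need $ASR(S,c)>n$, which follows from the crude lower bound since $L\ge n(1+\bar c)\ge n(1+c)$, so $\frac{L+2n+c}{1+c}>n$.

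\textbf{Step 4: $|S\cap\{x_i,\overline{x_i}\}|\ge1$.} Suppose $S$ hits $k<n$ variable pairs, with one literal each by Step 3. Then
\[
ASR(S,c)\le \frac{kL+3n+c}{k+c}.
\]
Adding $x_i$ for an untouched variable gains at least $L$ non-vulnerable vertices (namely $Y_i$) at the cost of one vulnerable vertex. This strictly helps as long as $L(k+c)>kL+3n+c$, i.e.\ $Lc>3n+c$. This is where the precise definition $L=\lceil(1+\bar c+2n\bar c)n\rceil$ matters: since $c\bar c\ge1$ and $\bar c\ge c$, we get $Lc\ge n c+2n^2>3n+c$ for $n\ge2$; small $n$ can be checked directly.

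\textbf{Conclusion and main obstacle.} Combining the steps gives $|S|=n$ with $S=\{s_1,\dots,s_n\}$, $s_i\in\{x_i,\overline{x_i}\}$. The main obstacle is making each inequality strict with the given $L$ uniformly in $c$, both large and small. This is exactly why $\bar c=\lceil\max\{c,1/c\}\rceil$ enters $L$, and Step 4 is the most delicate of these checks.
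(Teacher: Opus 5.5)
Your decomposition matches the paper's. Steps 1, 2, 3 and 4 are Claim 1, Claim 2 and the two halves of Claim 4. Your mediant criterion is a cleaner substitute for the paper's cross-multiplications: a vertex with a private vulnerable neighbour and $g$ newly dominated non-vulnerable vertices helps iff $g$ exceeds the ratio, and by the mediant property it does not matter whether you compare with the ratio before or after the move. Steps 2 and 3 go through with the crude bound $ASR(S,c)\ge\frac{L+2n+c}{1+c}$. Step 4 goes through with $Lc>3n+c$, and for $n=1$ it is vacuous because $k\ge 1$. Your restriction to $S\subseteq V_N$ is also necessary. With $\boldsymbol{\mathcal{S}}$ taken over all of $V$, as literally written, adding $z_i$ to a maximizer containing $x_i$ leaves $ASR$ unchanged and gives a maximizer of size $n+1$. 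The paper's proof tacitly makes the same restriction.

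Step 1, however, fails as written, and not only by a loose constant: the inequality $\frac{L+2n+c}{1+c}>4+c$ is false for large $c$. For $c\ge 1$ we have $\bar c=\lceil c\rceil$, so $L\approx n(2n+1)c$. The left side then stays near $n(2n+1)$ while $4+c$ grows without bound. Concretely, $c=100$ and $n=3$ give $L=2103$ and $\frac{L+2n+c}{1+c}=\frac{2209}{101}<22$. In fact every nonempty $S$ satisfies $ASR(S,c)\le\frac{|V_N|+c}{1+c}=\frac{6418}{101}<64<104=4+c$, so no lower bound on the maximizer can ever exceed your upper bound. This is exactly the uniformity-in-$c$ problem you anticipated, but it bites in Step 1, not Step 4.

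The repair is to sharpen the upper bound rather than the lower one.
\begin{itemize}
\item For $S\neq\emptyset$ with $S\cap X=\emptyset$, you already have $\frac{4|S|+c}{|S|+c}<4$, and $ASR(\emptyset,c)=1$. Moreover $\frac{L+2n+c}{1+c}\ge 4$ because $L+2n\ge 6+10c\ge 4+3c$ when $n\ge 2$.
\item Uniformly in $n$ and $c$, you can instead use the paper's exact count $\sigma_N(S)=|S|+|X(S)|$ for $S\subseteq C\cup Y$. This gives $ASR(S,c)\le 1+\frac{2n}{1+c}<1+\frac{L+2n-1}{1+c}\le ASR(\{x_1\},c)$, since $L>1$.
\end{itemize}
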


\begin{proof} 
     It will be sequentially shown the next claims:
    \begin{enumerate}
     \item[] $\mathbf{Claim \ 1:}$ Every maximum size solution contains at least one vertex from set $X$.
      \item[] $\mathbf{Claim \ 2:}$ Every maximum size solution is a subset of $X$ vertices, that is $S\subseteq X$.
      \item[] $\mathbf{Claim \ 3:}$ Let $S\subseteq V$ with the property $S\cap X\neq\emptyset$ and $i \in \{1,2,\cdots,n\}$. If $ x\in \{x_i, \overline{x_i}\}$ then:
    	\[
    	\sigma_N(S\cup \{x\}) = \begin{cases}
    		|Y_i|+|C_{S}(x)|+\sigma_N(S) &  
    		, x\notin S \ and \ S\cap\{x_i, \overline{x_i}\}=\emptyset \\\\
    		\sigma_N(S) &  , x\in S \\\\
    		|C_{S}(x)|+\sigma_N(S) & , otherwise\\
    	\end{cases}
    	\]
      \item[] $\mathbf{Claim \ 4:}$ Every maximum size solution has exactly n vertices, that is $|S|=n$.
    \end{enumerate}
    \noindent$\mathbf{Claim \ 1 \ proof:}$

    $\bullet$ First, we show that $S$ is not empty.
    
    \noindent Let $S=\emptyset$. Then:
    	
    	\[
    	ASR(S,c) = ASR(\emptyset,c) = \frac{\sigma_N(S) + c}{\sigma_L(S) + c} =  \frac{0 + c}{0 + c} = 1
    	\]
    	We have a contradiction because:
    	\[
    	ASR(\{x_1\},c) = \frac{|Y_1|+  |C_{\emptyset}(x_1)| +|X|+ c}{1 + c} = \frac{|Y_1|+ |C_{\emptyset}(x_1)| + c}{1 + c} + \frac{|X|}{1 + c} \Rightarrow
     \]
     \[
     ASR(\{x_1\},c) > 1 + \frac{2n}{1 + c} \geq ASR(S,c)
    	\]
     
    $\bullet$ We show that $S$ has at least one vertex from set $X$.	
    
    \noindent Suppose $S\neq \emptyset$  contains no variable $x\in X$ and so $\emptyset\subset S\subseteq C\cup Y$. We have:
    	
    	\[ASR(S,c) = \frac{\sigma_N(S) + c}{\sigma_L(S) + c} =  \frac{|S| + |X(S)| + c}{|S| + c} = 1 + \frac{|X(S)|}{|S| + c} \leq 1+\frac{2n}{1 + c}
    	\]
    	We have a contradiction because:
    	\[
    	ASR(\{x_1\},c) > 1 + \frac{2n}{1 + c} \geq ASR(S,c)
    	\]
    \hfill$\square$
    
    \noindent $\mathbf{Claim \ 2 \ proof:}$
    
    $\bullet$ First Step: We show that $S\cap Y=\emptyset$.
    
    \noindent Let $(\exists y\in Y)  \ y\in S$. Without loss of generality, let $y=y_{i,j}$. We show that if we remove the node $y$ from the set $S$, the value of the function $ASR$ decreases.
    	By graph structure, $\sigma_L(S\setminus \{y_i\})=\sigma_L(S)-1$. Also, it has been shown that $(\exists x\in X) \ x\in S$ and that means $x_i$ and $\overline{x_i}$ are already neighbors of $S$. So, we have two cases:
    	\begin{itemize}
    		\item If $x_i\in S$ or $\overline{x_i}\in S$ then, by removing $y_i$ from the set $S$, the value of $\sigma_N$ does not change because $y_i$ is already adjacent with $x_i$ or $\overline{x_i}$ and so $\sigma_N(S\setminus \{y_i\})=\sigma_N(S)$. We have:
    		\[
    		ASR(S\setminus \{y_i\}, c) = \frac{\sigma_N(S) + c}{\sigma_L(S)-1 + c} > \frac{\sigma_N(S) + c}{\sigma_L(S) + c} = ASR(S, c)
    		\]
    		\item If $x_i\notin S$ and $\overline{x_i}\notin S$ then, by removing $y_i$ from the set $S$, the value of $\sigma_N$ decreases by 1 because $y_i$ is not adjacent with $S$ and so $\sigma_N(S\setminus \{y_i\})=\sigma_N(S)-1$. We have:
    		\[
    		ASR(S\setminus \{y_i\}, c) = \frac{\sigma_N(S)-1 + c}{\sigma_L(S)-1 + c} > \frac{\sigma_N(S) + c}{\sigma_L(S) + c} = ASR(S, c)
    		\]
    	\end{itemize}
    	The above inequality stands because, by Claim 1 and graph structure, we have $ASR(S, c)>1$, $\sigma_N(S)-1+c > 0$ and $\sigma_L(S)-1+c > 0$. So, we have a contradiction.\\
     
        $\bullet$ Second Step: We show that $S\cap C = \emptyset$.
        
        \noindent Let $\exists v\in C$ such that $  \ v\in S$ and let $v=c_i$. We show that if we remove the node $v$ from the set $S$, the value of the function $ASR$ decreases. By graph structure, $\sigma_L(S\setminus \{c_i\})=\sigma_L(S)-1$. So, we have two cases:
    	\begin{itemize}
    		\item If $N(c_i)\cap S \neq \emptyset$ then, in the same way with first step, $\sigma_N(S\setminus \{c_i\})=\sigma_N(S)$ and so
    		\[
    		ASR(S\setminus \{c_i\}) = \frac{\sigma_N(S) + c}{\sigma_L(S)-1 + c} > \frac{\sigma_N(S) + c}{\sigma_L(S) + c} = ASR(S)
    		\]
    		\item If $N(c_i)\cap S = \emptyset$ then, in the same way with first step, $\sigma_N(S\setminus \{c_i\})=\sigma_N(S)-1$ and since $ASR(S,c)>1$
    		\[
    		ASR(S\setminus \{c_i\}) = \frac{\sigma_N(S)-1 + c}{\sigma_L(S)-1 + c} > \frac{\sigma_N(S) + c}{\sigma_L(S) + c} = ASR(S,c)
    		\]
    	\end{itemize}
    	So, we have a contradiction. 
    	\hfill $\square$\\
     
    \noindent$\mathbf{Claim \ 3 \ proof:}$      
        \begin{itemize}
                
                \item[$\bullet$] Let $x\in S$. Then, $S\cup \{x\} = S$ and so we have $\sigma_N(S\cup \{x\})=\sigma_N(S)$
                
                \item[$\bullet$] Let $x\notin S$ and $S\cap\{x_i, \overline{x_i}\}=\emptyset$. Then, we have:
                \begin{itemize}
                  \item[$(a)$] $X \subseteq N(S)$,  because, by Claim 1  there exists $ x \in X $ such that $x \in S$ and by definition $X \subseteq N(x)$.
                 \item[$(b)$]
                 $y_{ij}\notin N(S)$ for every $j$, because $S\cap\{x_i, \overline{x_i}\}=\emptyset$.
                 \item[$(c)$]
                 $C_{S}(x)$ is the set of all $c\in C$ with the property $c\notin N(S)$ and $c\in N(S\cup{\{x\}})$.                
                  \end{itemize}
                  So, comparing the neighbors of the sets $S\cup{\{x\}}$ and $S$, the set $S\cup{\{x\}}$ has as neighbors the neighbors of $S$ and additionally the nodes of the sets $Y_i$ and $C_{S}(x)$, that is $\sigma_N(S\cup \{x\}) = |Y_i|+|C_{S}(x)|+\sigma_N(S)$.
                
                \item[$\bullet$] Let $x\notin S$ and let $x = x_i$ and $\overline{x_i}\in S$. Then, we have the same condition as the previous case with the exception that for every $j$, $y_{ij}\in N(S)$ because $\overline{x_i}\in S$. So, $\sigma_N(S\cup \{x\}) = |C_{S}(x)|+\sigma_N(S)$.
        \end{itemize}
    \hfill $\square$\\
    
   \noindent$\mathbf{Claim \ 4 \ proof:}$  
   
   $\bullet$ First, we show that $S$ contains at most one of every pair of vertices $\{x_i, \overline{x_i}\}$.
    
    \noindent Let $x_i, \overline{x_i} \in S$ and $S' = S\setminus \{x_i, \overline{x_i}\}$. Without loss generality, if we show that $ASR(S)<ASR(S\setminus \{\overline{x_i}\})$, then the result stand by contradiction. By using claim 3, we have:
    	\[
    	ASR(S\setminus \{\overline{x_i}\},c)= ASR(S'\cup \{x_i\},c)=\frac{\sigma_N[S'\cup \{x_i\}] + c}{\sigma_L(S'\cup \{x_i\}) + c}  \iff
    	\]
        \[
    	ASR(S\setminus \{\overline{x_i}\},c)= \frac{|Y_i|+|C_{S'}(x_i)|+\sigma_N(S') + c}{1+|S'| + c}
    	\]
    	Also, because $ASR(S,c)= ASR((S' \cup \{x_i\})\cup \ \{\overline{x_i}\},c)$, we have:
    	\[
    	ASR(S,c)= \frac{|C_{S'\cup \{x_i\}}(\overline{x_i})|+\sigma_N(S'\cup \{x_i\}) + c}{1+|S'\cup \{x_i\}| + c}
    	  \iff \]
    	\[
    	ASR(S,c)= \frac{|Y_i|+|C_{S'}(x_i)|+|C_{S'\cup \{x_i\}}(\overline{x_i})|+\sigma_N(S') + c}{2+|S'| + c}
    	\]
    	We have:
    	\[
    	ASR(S,c)<ASR(S\setminus \{\overline{x_i}\},c) \iff
    	\]
    	\[
    	\frac{|Y_i|+|C_{S'}(x_i)|+|C_{S'\cup \{x_i\}}(\overline{x_i})|+\sigma_N(S') + c}{2+|S'| + c}< \frac{|Y_i|+|C_{S'}(x_i)|+\sigma_N(S') + c}{1+|S'| + c}
    	\]
        \[
        \iff
        \]
    	\[
    	(1+|S'| + c)|C_{S'\cup \{x_i\}}(\overline{x_i})| < |Y_i|+|C_{S'}(x_i)|+\sigma_N(S')+c
    	\]
    	The inequality stands because $S'\subseteq X \then \ |S'|\leq |X| = 2n$,\\
        $C_{S'\cup \{x_i\}}(\overline{x_i})\subseteq C \then |C_{S'\cup \{x_i\}}(\overline{x_i})|\leq |C| = n$ and so:
    	\[
    	|Y_i|+|C_{S'}(x_i)|+\sigma_N(S')+c > |Y_i| = (1+\bar{c}+2n\bar{c})n \geq (1+2n+c)n \iff
    	\]
        \[
    	|Y_i|+|C_{S'}(x_i)|+\sigma_N(S')+c >  (1+|S'| + c)|C_{S'\cup \{x_i\}}(\overline{x_i})|
    	\]

    $\bullet$ In the last step, we show that $S$ has exactly $n$ elements.\\   \\ 
    \noindent $-$ Let $S$ has more than $n$ element, that is $|S|>n$. By claim 2, $S\subseteq X$. So, $\exists i=\{1,...,n\}$ such that $x_i, \overline{x_i}\in S$. We have a contradiction.\\  \\   
    \noindent $-$ Let $|S|=k < n$. Then, $(\exists x_i\in V) \ x_i\notin S$. If we show that $ASR(S,c)<ASR(S\cup \{x_i\},c)$, then the results stand by contradiction. We have:
    	\[
    	ASR(S,c)<ASR(S\cup \{x_i\},c) \iff
    	\]
    	\[
    	\frac{\sigma_N(S) + c}{|S| + c}< \frac{|Y_i|+|C_{S}(x_i)|+\sigma_N(S) + c}{1+|S| + c}\iff
    	\]
    	\[
    	\sigma_N(S)+c<(|Y_i|+|C_{S}(x_i)|)(|S|+c)
    	\]
    	The inequality stands because:
    	\[
    	(|Y_i|+|C_{S}(x_i)|)(|S|+c) \geq |Y_i|(k+c) = |Y_i|k+|Y_i|c = |Y_i|k+(1+2n\cdot \bar{c}+\bar{c})\cdot n\cdot c \geq\]
     \[
     |Y_i|k +\bar{c}\cdot c \cdot n + +\bar{c}\cdot c \cdot 2n^2+ n\cdot c > |Y_i|k + n +2n + c \iff
    	\]
    	\[
    	(|Y_i|+|C_{S}(x_i)|+ c)(|S|+c) > |Y_i|k + n + 2n + c \geq |Y_i|k + |C(S)|+2n + c = \sigma_N(S) + c
    	\]
        
\end{proof} 

    \begin{lemma}\label{lem:ASR<W}
    	If there exists no satisfiable assignment for $\phi$, then 
        $\forall S \subseteq V$ it holds $ASR(S,c) < \frac{Ln+3n+c}{n+c}$.
    \end{lemma}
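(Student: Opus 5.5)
Since $ASR(S,c)\le ASR(S^*,c)$ for every $S\subseteq V$ and any maximizer $S^*\in\boldsymbol{\mathcal{S}}$, it suffices to bound $ASR(S^*,c)$. We therefore take an arbitrary maximizer $S^*$ and show $ASR(S^*,c)<W=\frac{Ln+3n+c}{n+c}$. By Lemma~\ref{lem:con}, $S^*=\{s_1,\dots,s_n\}$ with $s_i\in\{x_i,\overline{x_i}\}$ for every $i$. Such a set is exactly a truth assignment: set $x_i$ true iff $s_i=x_i$. We then compute $\sigma_N(S^*)$ and $\sigma_L(S^*)$ exactly from the structure of $G$.

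\emph{Vulnerable side.} The vulnerable vertices in $N[S^*]$ are only the vulnerable leaves attached to the vertices of $S^*$. The leaves hanging off $Y$, $C$ and the literals not in $S^*$ are not adjacent to $S^*$. Hence $\sigma_L(S^*)=|S^*|=n$, exactly as in the proof of Lemma~\ref{lem:ASR>W}.

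\emph{Non-vulnerable side.} $N[S^*]$ contains all of $X$ (the literals form a clique), giving $2n$. It contains all of $Y$, since each $Y_i$ is adjacent to $s_i$, giving $Ln$. Its clause part is $C(S^*)$, the set of clauses satisfied by the corresponding assignment. Thus
\[
\sigma_N(S^*)=Ln+2n+|C(S^*)|.
\]

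Since $\phi$ is unsatisfiable, this assignment leaves at least one clause unsatisfied, so $|C(S^*)|\le n-1$. Therefore
\[
ASR(S^*,c)\le\frac{Ln+3n-1+c}{n+c}<\frac{Ln+3n+c}{n+c}=W.
\]

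The main point needing care is the claim that every maximizer has the form given by Lemma~\ref{lem:con}. That lemma was proved for maximizers, and that is exactly what we use, so the reduction to maximizers is legitimate. The only remaining check is that $c$ does not interfere: both fractions have the same positive denominator $n+c$, so the strict inequality holds for every $c>0$.
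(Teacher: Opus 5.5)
Your proposal is correct and follows the paper's proof essentially verbatim. Both pass to a maximizer $S^*\in\boldsymbol{\mathcal{S}}$, invoke Lemma~\ref{lem:con} to read $S^*$ as a truth assignment, and use $\sigma_L(S^*)=n$ and $\sigma_N(S^*)=Ln+2n+|C(S^*)|$ with $|C(S^*)|\le n-1$. You spell out two steps the paper leaves implicit (why bounding a maximizer bounds every $S$, and why $\sigma_L(S^*)=n$), but the argument is the same.
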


     \begin{proof}
    	Let $S\in \boldsymbol{\mathcal{S}}$ be a subset of vertices that maximize the function  $ASR$. By Lemma~\ref{lem:con}, it holds $S =\{s_1, s_2, ..., s_n\}, \ s_i = x_i \ or \  \overline{x_i}, \ \forall i=\{1,...,n\}$, there exists $c_i\in C$ such that $c_i\notin N(S)$.
    	
    	Because there exists no satisfiable assignment for $3$-$\mathcal{SAT}_{equal}$ there exists a clause $c_i\in C$ that is not satisfiable. By reduction, for every set $S\in \boldsymbol{\mathcal{S}}$, we have:
    	
    	\[
    	ASR(S,c) = \frac{|Y_i|\cdot n+|C(S)|+2n+c}{n+c}\leq \frac{Ln+n-1+2n+c}{n+c}<\frac{Ln+3n+c}{n+c}
    	\]  
    \end{proof}
    
\begin{theorem}\label{th:drvv_np_complete}
Decision $\mathit{DRVV}$ is NP-Complete.
\end{theorem}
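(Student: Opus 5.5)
The proof splits into membership in NP and NP-hardness, with hardness coming from the reduction already built from $3$-$\mathcal{SAT}_{equal}$.

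\textbf{Membership in NP.} The certificate is a set $S\subseteq V_N$. Given $S$, we compute $N[S]$ in $O(|V|+|E|)$ time. From it we count $\sigma_N(S)=|N[S]\cap V_N|$ and $\sigma_L(S)=|N[S]\cap V_L|$, and then check $\frac{\sigma_N(S)+c}{\sigma_L(S)+c}\geq W$. Since $c$ and $W$ are given as rationals, clearing denominators turns this into $(\sigma_N(S)+c)\geq W(\sigma_L(S)+c)$, a comparison of rationals of polynomial bit-length, so verification takes polynomial time.

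\textbf{NP-hardness.} Given a $3$-$\mathcal{SAT}_{equal}$ formula $\phi$, we build the graph $G$ of the reduction. Here $L=\lceil(1+\bar c+2n\bar c)n\rceil$, which is polynomial in $n$ for fixed $c$. We set $W=\frac{Ln+3n+c}{n+c}$. The construction is polynomial, as noted in the paper.

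\textbf{Correctness of the reduction.} If $\phi$ is satisfiable, Lemma~\ref{lem:ASR>W} gives a set $S$ with $ASR(S,c)\geq W$. That set consists of literal vertices, so $S\subseteq V_N$, as the decision problem requires. If $\phi$ is unsatisfiable, Lemma~\ref{lem:ASR<W} gives $ASR(S,c)<W$ for every $S\subseteq V$, in particular for every $S\subseteq V_N$. Hence $\phi$ is satisfiable if and only if $(G,W,c)$ is a yes-instance.

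\textbf{Main obstacle.} The delicate point is consistency of domains: Lemmas~\ref{lem:con} and \ref{lem:ASR<W} are stated over all $S\subseteq V$, while the problem restricts to $S\subseteq V_N$. Since $V_N\subseteq V$, the upper bound transfers directly. For the lower bound, the witness set consists only of literal vertices, which are non-vulnerable.

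We should also make sure $c$ is a rational constant, so that $\bar c$, $L$ and $W$ have polynomial encodings. Combining membership in NP with NP-completeness of $3$-$\mathcal{SAT}_{equal}$ \cite{lamprou2021maximum} completes the proof.
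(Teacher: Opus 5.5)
Your proposal is correct and takes essentially the same route as the paper: you show membership in NP by evaluating $ASR(S,c)$ in polynomial time, and you get NP-hardness from the $3$-$\mathcal{SAT}_{equal}$ construction with threshold $W=\frac{Ln+3n+c}{n+c}$, using Lemma~\ref{lem:ASR>W} for the satisfiable case and Lemma~\ref{lem:ASR<W} (which rests on Lemma~\ref{lem:con}) for the unsatisfiable case. Your extra remarks are sound refinements of points the paper leaves implicit: the witness consists of literal vertices and so lies in $V_N$, and $c$ can be fixed to a rational constant so that $L$ and $W$ have polynomial encodings.
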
 
\begin{proof}
    Decision $\mathit{DRVV}$ is in NP: given a candidate subset $S \subseteq V$, the value $ASR(S,c)$ can be computed in polynomial time, so membership can be verified in polynomial time.

    For NP-hardness, we reduce from $3$-$\mathcal{SAT}_{equal}$. Given an instance $\phi$, construct the graph $G$ and threshold $W = \frac{Ln+3n+c}{n+c}$ as described above. By Lemma~\ref{lem:ASR>W}, if $\phi$ is satisfiable then there exists $S \subseteq V$ with $ASR(S,c) \geq W$, so the decision instance answers \emph{yes}. By Lemma~\ref{lem:con} and Lemma~\ref{lem:ASR<W}, if $\phi$ is not satisfiable then every $S \subseteq V$ satisfies $ASR(S,c) < W$, so the decision instance answers \emph{no}. The construction runs in polynomial time, completing the reduction.
    \end{proof}

Theorem~\ref{th:drvv_np_complete} implies the hardness of $k\textit{-}Max \ \mathit{DRVV}$. Indeed, the existence of a polynomial-time algorithm for it would have led to a polynomial algorithm for $\mathit{DRVV}$.

\begin{theorem}\label{th:k-drvv_np_complete}
$k\textit{-}Max \ \mathit{DRVV}$ is NP-Hard.
\end{theorem}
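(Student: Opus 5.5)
The plan is to reduce $\mathit{DRVV}$ to $k\textit{-}Max \ \mathit{DRVV}$ by choosing the budget large enough that it no longer restricts anything. Theorem~\ref{th:drvv_np_complete} then transfers directly. Given an instance $(G=(V,E), V_N, V_L, c)$ of $\mathit{DRVV}$, form the instance $(G, V_N, V_L, k, c)$ of $k\textit{-}Max \ \mathit{DRVV}$ with $k = |V_N|$. This choice is allowed, since the definition only requires $k \leq |V_N|$. Every feasible $S\subseteq V_N$ automatically satisfies $|S|\leq |V_N| = k$. Hence the two instances have exactly the same feasible sets and the same objective $ASR(S,c)$, and therefore the same optimal solutions and the same optimal value. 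The transformation is clearly polynomial.

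Next I would argue by contradiction. Suppose a polynomial-time algorithm solved $k\textit{-}Max \ \mathit{DRVV}$ exactly. Running it on the transformed instance gives an optimal set $S^*$ for $\mathit{DRVV}$. Computing $ASR(S^*,c)$ and comparing it with $W$ then decides Decision $\mathit{DRVV}$ in polynomial time. By Theorem~\ref{th:drvv_np_complete}, this would imply P $=$ NP, so $k\textit{-}Max \ \mathit{DRVV}$ is NP-hard.

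The one point needing care is that the reduction in Theorem~\ref{th:drvv_np_complete} was stated for $S\subseteq V$, while $\mathit{DRVV}$ restricts to $S\subseteq V_N$. I would check that this causes no gap. By Lemma~\ref{lem:con}, every maximizer lies in $X$, and $X\subseteq V_N$. Moreover, any set containing vulnerable vertices can only be improved by removing them, by the same removal argument used in Claim 2. So the optimum over $V_N$ coincides with the optimum over $V$, and the equivalence of the two instances above holds for the hard instances produced by that reduction. This consistency check is the only step I expect to require any thought; the rest is immediate.
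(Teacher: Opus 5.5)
Your proposal is correct and is essentially the paper's argument. The paper only observes that a polynomial-time algorithm for $k\textit{-}Max \ \mathit{DRVV}$ would give one for $\mathit{DRVV}$; your choice $k=|V_N|$, which makes the budget vacuous, followed by comparing $ASR(S^*,c)$ with $W$, simply makes that step explicit. Your extra check that restricting to $S\subseteq V_N$ rather than $S\subseteq V$ leaves no gap in the reduction behind Theorem~\ref{th:drvv_np_complete} is not in the paper. It is sound, though optional.
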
 
      
\subsection{\texorpdfstring {Maximum Domination with Bounded number of Vulnerable Vertices ($\mathit{DVV}$)}{Maximum Domination with Bounded number of Vulnerable Vertices (DVV)}}\label{sec:DVV}

Following, we define the Maximum Domination with Bounded number of Vulnerable Vertices. The goal is to maximize the count of dominated non-vulnerable vertices while the vulnerable vertices are kept up to a bound. 

\begin{definition}[$\mathit{DVV}$]
	Given a graph $G=(V, E)$ and an integer $B$, where vertices are partitioned into non-vulnerable $V_N$ and vulnerable $V_L$, find a subset of vertices $S\subseteq V_N$ such that $S$ dominates at most $B$ vulnerable vertices $(\sigma_L(S)\leq B$, $0\leq B \leq |V_L|$ integer$)$ and maximizes the number of non-vulnerable vertices it dominates. 
\end{definition}

$\mathit{DVV}$ is NP-hard, and to prove this, we use a reduction from $k\textit{-}Max \ VD$, which was proven to be NP-hard \cite{miyano2011maximum}.

\begin{definition}[$k\textit{-}Max \ VD$ \cite{miyano2011maximum}]
	Given a graph $G (V, E)$ and an integer $k<|V|$ find a subset $S\subseteq V$ where $|S| \le k$ such that the count of dominated vertices is maximized. 
\end{definition}

\begin{theorem}\label{theor:DVV_hard}
    The $\mathit{DVV}$ is as hard as $k\textit{-}Max \ VD$.
\end{theorem}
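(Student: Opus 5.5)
We reduce $k\textit{-}Max \ VD$ to $\mathit{DVV}$ in polynomial time, in a way that preserves optimal values exactly. The point is to turn the cardinality budget $|S|\le k$ into a budget on dominated vulnerable vertices. Let $(G=(V,E),k)$ be a $k\textit{-}Max \ VD$ instance. We build $G'$ by taking every vertex of $V$ as non-vulnerable, so $V_N=V$. For each $v\in V$ we add one new vulnerable pendant vertex $\ell_v$ adjacent only to $v$, so $V_L=\{\ell_v : v\in V\}$. Finally we set $B=k$. The construction clearly takes polynomial time and keeps all edges of $G$.

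The key observation concerns any $S\subseteq V_N$. Each $\ell_v$ has exactly one neighbor, namely $v$, and $\ell_v\notin S$ because $S\subseteq V_N$. Hence $\ell_v\in N[S]$ if and only if $v\in S$, which gives $\sigma_L(S)=|S|$. Because the pendant vertices are vulnerable, they never contribute to $\sigma_N$. So $\sigma_N(S)=|N_{G'}[S]\cap V|=|N_G[S]|$, the number of vertices of $G$ dominated by $S$.

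It follows that the feasible sets of the $\mathit{DVV}$ instance, namely $S\subseteq V_N$ with $\sigma_L(S)\le B=k$, are exactly the feasible sets of $k\textit{-}Max \ VD$, those $S\subseteq V$ with $|S|\le k$. The two objectives also coincide on every such set. Therefore optimal solutions, and indeed any solutions together with their values, correspond one-to-one. Any exact or approximation algorithm for $\mathit{DVV}$ then yields one for $k\textit{-}Max \ VD$ with the same ratio, and NP-hardness of $\mathit{DVV}$ follows from \cite{miyano2011maximum}.

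The only delicate step is checking the accounting precisely. We must confirm that adding the pendant vulnerable vertices changes neither domination among the original vertices nor $\sigma_N$. We must also confirm that the constraint $B=k$ lies in the allowed range $0\le B\le |V_L|=|V|$, which holds because $k<|V|$. If the paper's hardness notion refers to the decision version, we will additionally pass the threshold on dominated vertices through unchanged.
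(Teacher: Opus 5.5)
Your proposal is correct and follows the paper's proof: attach a vulnerable pendant vertex to every original vertex, set $B=k$, and observe that $\sigma_L(S)=|S|$ while $\sigma_N(S)$ equals the number of vertices of $G$ dominated by $S$, so the feasible sets and the objective values coincide. You also verify these two identities and the range condition $0\le B\le |V_L|$ explicitly, which the paper only asserts in one line.
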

\begin{proof}
Given an instance of $k\textit{-}Max \ VD$ with a graph $G=(V, E)$, for every vertex $v_i$, add a vertex $u_i$ and an edge $[v_i ,u_i]$. Let $V'$ be the set of new vertices and $F$ the set of new edges. For the augmented graph $G'=(V\cup V', E\cup F)$, $V$ is the set of non-vulnerable vertices and $V'$ is the set of vulnerable vertices. Let $S'$ be an optimal solution  of $\mathit{DVV}$ on $G'$. Since every vertex in $G'$ has by construction, exactly one vulnerable neighbor, setting $B=k$ ensures $S'$ has at most $k$ vertices and it dominates maximum count of non-vulnerable vertices. So, solution $S'$ for $\mathit{DVV}$ on $G'$ is exactly an optimal solution $S$ for $k\textit{-}Max \ VD$ on $G$ and vice versa.
\end{proof}

To address the $\mathit{DVV}$ problem, we formulate it as a Set Union Knapsack Problem ($SUKP$) \cite{ARULSELVAN2014214}. This problem admits a polynomial approximation algorithm with a constant approximation ratio of $ (1-e^{-\frac{1}{d}})$, when the number of items in which an element is present is bounded by a constant $d$.

\begin{definition}[$SUKP$]
The problem consists of a set of elements $U = \{1,...,n\}$ and a set of items $X=\{1,...,m\}$. Each item $i\in X$ corresponds to a subset of elements $X_i \subseteq U$ with a nonnegative profit given by $p : X \rightarrow\mathbb{R}^+$. Each element has a nonnegative weight given by $w : U \rightarrow\mathbb{R}^+$. For a subset $A \subseteq X$, the weighted union of set
$A$ is $W (A) = \sum_{e \in (\cup_{i \in A}X_{i})}w(e)$ and $P(A) = \sum_{i  \in A}p(i)$. The goal is to find a subset of items $S \subseteq X$ such that $P(S)$ is maximized and $W(S) \leq B$, where $B$ is a given budget. Let $(X,p,w,B)$ denote a $SUKP$ instance.
\end{definition}

\textbf{The Algorithm.} We transform each instance of $\mathit{DVV}$ into an instance of $SUKP$ (see Algorithm~\ref{alg:dvv}). At first, all vertices that have no vulnerable neighbors are added to the solution. The remaining vertices are used to construct an instance of the $SUKP$. Each vulnerable vertex corresponds to an element, which is assigned a weight of 1. Each non-vulnerable vertex is represented as an item, including the vulnerable neighbors it dominates. The profit of each item is set equal to the number of non-vulnerable vertices it dominates. Based on this mapping, a $SUKP$ instance $(X, p, w, B)$ is created and solved using a $SUKP$ algorithm provided by \cite{ARULSELVAN2014214}. The solution obtained from $SUKP$ is then translated back into a feasible solution for the $\mathit{DVV}$ problem. The complexity of this algorithm is dominated by the complexity of the $SUKP$ algorithm, which is $O(|V|^3)$. We prove that the  approximation guarantee for $\mathit{DVV}$ is  $\frac{1}{\Delta+1} (1-e^{-\frac{1}{\Delta}})$.

 \begin{algorithm}[H]
        \label{alg:dvv}
	\SetKwInOut{Input}{Input}
	\SetKwInOut{Output}{Output}
	\DontPrintSemicolon
	\Input{A graph $G=(V, E)$ and an integer $B \leq |V_L|$. The vertices $V$ are partitioned into non-vulnerable $V_N$ and vulnerable $V_L$}
	\Output{A subset of vertices $S\subseteq V_N$ such that $S$ is a feasible solution}
	
	$S_0 \leftarrow \emptyset$\; 
	\ForEach{$u \in V_L$}{
		$w(u) \leftarrow 1$\;
	}
	\ForEach{$\upsilon \in V_N$}{
		$X_\upsilon \leftarrow N(\upsilon)\cap V_L$\;
            \If{$X_\upsilon = \emptyset$}{
            
            $S_0 \leftarrow S_0 \cup \{\upsilon\}$\;
           
            }
	}
        \ForEach{$\upsilon \in V_N $}{
		$p(\upsilon) \leftarrow \sigma_N(\upsilon)$\;
	}
	$X \leftarrow V_N \setminus S_0$\;
        $S_{SUKP} \leftarrow SUKP(X,p,w,B)$\;
        
	$S \leftarrow S_0 \cup S_{SUKP}$\;
	\textbf{return} $S$
	
	\caption{$\mathit{DVV}(G,B)$}
\end{algorithm}

 \textbf{The Analysis.} We proceed with some notation. Let $S^*$ be an optimal solution of $\mathit{DVV}$, $S_{SUKP}^*$ an optimal solution of the $SUKP$ problem, $S_{SUKP}$ be a set of vertices that $SUKP$ algorithm returns, $S_0$ be the set of vertices with no vulnerable neighbors and $S$ the set of vertices the algorithm returns as the solution of $\mathit{DVV}$. Also $P(S)=\sum_{v \in S}p(v)$ be the sum of profits of a set of vertices $S$, $\Delta_{N}=\max_{\upsilon \in V_N}|N(\upsilon)\cap V_N| \leq \Delta$, $\Delta_{L} = \max_{\upsilon \in V_L} |N(\upsilon)\cap V_N| \leq \Delta$ and $N_A[u]$ is defined to be the closed neighborhood of vertex $u$ in subset $A$. Recall that $\sigma_N(S)$ is the count of non-vulnerable vertices in the closed neighborhood of $S$ (including $S$ itself). For the analysis all non-vulnerable vertices are counted in the profit function $p(\upsilon) = \sigma_N(\upsilon)$, for every $ \upsilon \in V_N$. 

\begin{lemma}\label{lemma:dvv_inequality}
For any subset $A \subseteq V $ it holds:
	$\sigma_N(A)\leq P(A) \leq (\Delta_{N}+1)\cdot \sigma_N(A).$
\end{lemma}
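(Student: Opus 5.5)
The plan is to prove the inequality by a double-counting argument, comparing the set $N[A]\cap V_N$ with the multiset formed by the closed non-vulnerable neighbourhoods of the vertices of $A$. Since $p(\upsilon)=\sigma_N(\upsilon)=|N[\upsilon]\cap V_N|$, we have
\[
P(A)=\sum_{\upsilon\in A}|N[\upsilon]\cap V_N|,\qquad \sigma_N(A)=\Bigl|\bigcup_{\upsilon\in A}N[\upsilon]\cap V_N\Bigr|.
$$
Strictly, $p$ is only defined on $V_N$. I will treat $A$ as a subset of $V_N$, which is the case in which the lemma is applied, since solutions of $\mathit{DVV}$ lie in $V_N$. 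A vulnerable vertex would contribute $p=|N[\upsilon]\cap V_N|$ under the same formula, and the argument below goes through verbatim, except that the bound for it would use $\Delta_L$.

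The lower bound is the union bound. The cardinality of a union is at most the sum of the cardinalities, so $\sigma_N(A)\le\sum_{\upsilon\in A}|N[\upsilon]\cap V_N| = P(A)$.

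For the upper bound I will count, for each $w\in N[A]\cap V_N$, how many $\upsilon\in A$ satisfy $w\in N[\upsilon]$. We have $w\in N[\upsilon]$ if and only if $\upsilon\in N[w]$, and $\upsilon\in A\subseteq V_N$. Hence this multiplicity is at most $|N[w]\cap V_N|\le |N(w)\cap V_N|+1\le \Delta_N+1$, using that $w\in V_N$. Exchanging the order of summation gives
\[
P(A)=\sum_{w\in N[A]\cap V_N}|\{\upsilon\in A: w\in N[\upsilon]\}|\le(\Delta_N+1)\,\sigma_N(A).
\]

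The only point needing care is the multiplicity bound. The key observation is that both $w$ and the counted vertices $\upsilon$ are non-vulnerable, so the relevant degree is $\Delta_N$ rather than $\Delta$. If $A$ were allowed to contain vulnerable vertices, those would be counted through $N(w)\cap V_L$, and the bound would weaken to $\Delta+1$. I will therefore state explicitly that $A\subseteq V_N$ is the intended setting.
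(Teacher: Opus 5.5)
Your proof is correct and is essentially the paper's argument: both exchange the order of summation to write $P(A)=\sum_{u\in N[A]\cap V_N}|N[u]\cap A|$ and bound each multiplicity between $1$ and $\Delta_N+1$ (your union bound is the lower multiplicity bound). Restricting to $A\subseteq V_N$ is the right reading: $p$ is only defined on $V_N$, the paper's bound $|N_A[u]|\le\Delta_N+1$ silently needs this restriction, and with vulnerable vertices in $A$ the multiplicity also counts $N(u)\cap V_L$, as your last paragraph says, so your earlier aside about $\Delta_L$ is not the relevant quantity.
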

\begin{proof}
	By Algorithm~\ref{alg:dvv}, every profit of a vertex is equal to the number of non-vulnerable neighbors. So, every non-vulnerable vertex adds one point to each neighboring vertex profit, including itself. So:
    
	$$P(A)=\sum_{u \in A}p(u)
	=\sum_{u \in N[A] \cap V_N}\big|N_A[u]\big|$$
    
	By definition of $N_A[u]$, for every $u \in N[A] \cap V_N$, $1 \leq \big|N_A[u]\big| \leq \Delta_{N}+1$. So, 
	\[
	\sum_{u \in N[A] \cap V_N}1\leq
	\sum_{u \in N[A] \cap V_N}\big|N_A[u]\big|\leq
	\sum_{u \in N[A] \cap V_N}(\Delta_{N}+1)\Rightarrow
    \]
    \[
	\sigma_N(A)\leq
	P(A) \leq 
	(\Delta_{N}+1)\cdot\sigma_N(A)
	\]
\end{proof}

\begin{theorem}\label{thm:dvvProof}
Algorithm \ref{alg:dvv} is a
$	\frac{1}{\Delta+1}\cdot (1-e^{-\frac{1}{\Delta}})$ - approximation for $\mathit{DVV}$ problem.
\end{theorem}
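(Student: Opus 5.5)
We chain three comparisons: the domination of the returned set $S$ against its profit, that profit against the optimal $SUKP$ profit, and the optimal $SUKP$ profit against $\sigma_N(S^*)$. Two ingredients do the work: Lemma~\ref{lemma:dvv_inequality}, which converts between profit and domination, and the guarantee of \cite{ARULSELVAN2014214}, namely $P(S_{SUKP}) \geq (1-e^{-\frac{1}{d}})\,P(S^*_{SUKP})$.

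\textbf{Step 1: the frequency parameter.} In our instance an element is a vulnerable vertex $u$, and the items containing it are its non-vulnerable neighbours. So the frequency parameter is $d = \Delta_L = \max_{u\in V_L}|N(u)\cap V_N| \leq \Delta$. If some $u$ has no non-vulnerable neighbour it lies in no item, which is harmless. Since $1-e^{-1/d}$ is decreasing in $d$, the ratio $1-e^{-\frac{1}{\Delta}}$ is a valid lower bound for the $SUKP$ algorithm.

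\textbf{Step 2: feasibility correspondence.} For any $A \subseteq V_N\setminus S_0$, the weight $W(A)$ is exactly $\sigma_L(A)$. Vertices in $S_0$ have no vulnerable neighbours and $S_0\subseteq V_N$, so $\sigma_L(S_0\cup A)=\sigma_L(A)$. Hence $S$ is feasible for $\mathit{DVV}$. Conversely, $S^*\setminus S_0$ is a feasible $SUKP$ solution, so $P(S^*\setminus S_0)\leq P(S^*_{SUKP})$.

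\textbf{Step 3: the chain.} We bound
\[
\sigma_N(S) \geq \sigma_N(S_{SUKP}) \geq \tfrac{1}{\Delta_N+1}P(S_{SUKP}) \geq \tfrac{1}{\Delta+1}\left(1-e^{-\frac{1}{\Delta}}\right)P(S^*_{SUKP}).
\]
The first inequality is monotonicity of $\sigma_N$, the second is the upper bound of Lemma~\ref{lemma:dvv_inequality}, and the third is Step 1.

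It remains to show $P(S^*_{SUKP}) \geq \sigma_N(S^*)$. This is the main obstacle: the lower bound of Lemma~\ref{lemma:dvv_inequality} only gives $P(S^*\setminus S_0)\geq \sigma_N(S^*\setminus S_0)$, and this can be smaller than $\sigma_N(S^*)$ because vertices of $S_0$ are excluded from the items.

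\textbf{Step 4: handling $S_0$.} I would sidestep the issue by comparing $S$ with $S^*$ directly rather than passing through $\sigma_N(S_{SUKP})$ alone. Using $\sigma_N(S^*)\leq \sigma_N(S_0)+\sigma_N(S^*\setminus S_0)$, I would split $\sigma_N(S^*)$ into the part dominated by $S_0$ and the remainder. The remainder is bounded by $P(S^*\setminus S_0)\leq P(S^*_{SUKP})$, which Step 3 relates to $\sigma_N(S_{SUKP})$ with the factor $\frac{1}{\Delta+1}(1-e^{-1/\Delta})$. The $S_0$ part is recovered exactly, because $S\supseteq S_0$. Then
\[
\sigma_N(S)\ \geq\ \max\{\sigma_N(S_0),\ \sigma_N(S_{SUKP})\}\ \geq\ \tfrac12\big(\sigma_N(S_0)+\sigma_N(S_{SUKP})\big).
\]
The plan is to avoid that factor $\tfrac12$ by a sharper accounting, which I expect to be the delicate step. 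Vertices dominated by $S_0$ are counted in full in $\sigma_N(S)$, so the ratio is a weighted combination of $1$ and $\frac{1}{\Delta+1}(1-e^{-1/\Delta})$. Concretely, I would aim to show
\[
\sigma_N(S) \geq \sigma_N(S_0)+\sigma_N\big(S_{SUKP}\big)-\big|N[S_0]\cap N[S_{SUKP}]\cap V_N\big|.
\]
I would then charge the overlap to the $S_0$ term, possibly by defining profits relative to the vertices not already dominated by $S_0$ (residual profits). Lemma~\ref{lemma:dvv_inequality} applies verbatim to residual profits, and that repairs the chain. Combining Steps 1 to 4 yields $\sigma_N(S)\geq \frac{1}{\Delta+1}(1-e^{-\frac{1}{\Delta}})\,\sigma_N(S^*)$.
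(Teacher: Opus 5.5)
There is a genuine gap. Your Steps 1--3 are sound, but the argument is never finished. Step 4 only yields $\sigma_N(S)\ge \frac12\cdot\frac{1}{\Delta+1}(1-e^{-1/\Delta})\,\sigma_N(S^*)$, and the ``sharper accounting'' that is supposed to remove the $\frac12$ is stated only as an aim. The residual-profit idea cannot close it for the algorithm as stated. Algorithm~\ref{alg:dvv} assigns the full profits $p(\upsilon)=\sigma_N(\upsilon)$, so the guarantee of \cite{ARULSELVAN2014214} holds only for $P$, not for profits measured relative to $N[S_0]$. Applying it to residual profits would mean analysing a different algorithm. Also, your displayed inequality with the overlap term is just inclusion--exclusion (an equality), so by itself it makes no progress.

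The obstacle comes from your own Step 3. There you pass from $\sigma_N(S)$ to $\sigma_N(S_{SUKP})$ by monotonicity, which throws away $S_0$ before you convert to profits. The missing idea is to stay in profit space, where $P$ is additive over disjoint sets, and to apply Lemma~\ref{lemma:dvv_inequality} only to the whole sets $S$ and $S^*$. Since $S_{SUKP}\subseteq V_N\setminus S_0$, we have $P(S)=P(S_0)+P(S_{SUKP})$. Also $P(S^*)\le P(S_0)+P(S^*\setminus S_0)$; the paper simply assumes w.l.o.g.\ $S_0\subseteq S^*$, which is harmless because adding vertices with no vulnerable neighbour keeps $\sigma_L$ unchanged and cannot decrease $\sigma_N$. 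Using your Step 2 bound $P(S^*\setminus S_0)\le P(S^*_{SUKP})$, and $1-e^{-1/\Delta_L}\le 1$ on the $P(S_0)$ term, you get
\begin{align*}
(\Delta_N+1)\,\sigma_N(S) &\ge P(S) = P(S_0)+P(S_{SUKP}) \ge P(S_0)+\bigl(1-e^{-1/\Delta_L}\bigr)P(S^*\setminus S_0)\\
&\ge \bigl(1-e^{-1/\Delta_L}\bigr)P(S^*) \ge \bigl(1-e^{-1/\Delta_L}\bigr)\sigma_N(S^*).
\end{align*}
The overlap between $N[S_0]$ and $N[S_{SUKP}]$ that worried you is absorbed automatically. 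The upper half of the lemma, applied to $S$ as a whole, counts each dominated non-vulnerable vertex at most $\Delta_N+1$ times, whichever part of $S$ dominates it. This is exactly the paper's proof, and with $\Delta_N,\Delta_L\le\Delta$ it gives the claimed ratio with no loss of $\frac12$.
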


\begin{proof}
    There exists an optimal solution $S^*$ such that $S_0 \subseteq S^*$.  
	Let $(X,p,w,B)$ an instance of the $SUKP$ problem that Algorithm~\ref{alg:dvv} defines. By \cite{ARULSELVAN2014214}, $S_{SUKP}$ is a feasible solution for $SUKP$ instance and $(1-e^{-\frac{1}{\Delta_L}})\cdot P(S_{SUKP}^*) \leq P(S_{SUKP})$. By definition, $S^* \setminus S_0$ is also a feasible solution for $SUKP$ instance and so $P(S^* \setminus S_0) \leq  P(S_{SUKP}^*)$. We have:
 \[
 (1-e^{-\frac{1}{\Delta_L}})\cdot P(S^* \setminus S_0) \leq (1-e^{-\frac{1}{\Delta_L}})\cdot P(S_{SUKP}^*) 
\leq P(S_{SUKP}) \Rightarrow
 \]
 \[
(1-e^{-\frac{1}{\Delta_L}})\cdot (P(S^* \setminus S_0) +P(S_0)) \leq P(S_{SUKP})+P(S_0) \Rightarrow
 \]
\[
(1-e^{-\frac{1}{\Delta_L}})\cdot P(S^*) \leq P(S).
 \]
 
\noindent By Lemma~\ref{lemma:dvv_inequality}, $\sigma_N(S^*) \leq P(S^*)$ and $P(S)\leq(\Delta_{N}+1)\cdot \sigma_N(S)$. So: 
	\[
	(1-e^{-\frac{1}{\Delta_L}})\cdot \sigma_N(S^*) \leq (1-e^{-\frac{1}{\Delta_L}})\cdot P(S^*) \leq P(S) \leq (\Delta_{N}+1)\cdot \sigma_N(S)\Rightarrow
        \]
	\[
	\frac{1}{\Delta+1}\cdot (1-e^{-\frac{1}{\Delta}})\cdot \sigma_N(S^*)\leq \frac{1}{\Delta_{N}+1}\cdot (1-e^{-\frac{1}{\Delta_L}})\cdot \sigma_N(S^*)
	\leq \sigma_N(S)
	\]
\end{proof}

\subsection{\texorpdfstring {Algorithm and Approximation for $\mathit{DRVV}$}{Algorithm and Approximation for DRVV}}
\label{sec:algorithm} 

In order to get an approximation algorithm for $\mathit{DRVV}$ (Algorithm~\ref{alg:def8Algorithm}), we are using  Algorithm~\ref{alg:dvv}. We fix the number of vulnerable vertices $(B)$ and we create an instance of $\mathit{DVV}$. Then we call the $\mathit{DVV}$ algorithm, for $B$ from 0 to $|V_L|$. The value giving the best ratio is chosen, with the corresponding solution.

\begin{algorithm}[H]
	\SetKwInOut{Input}{Input}
	\SetKwInOut{Output}{Output}
	\DontPrintSemicolon
	\label{alg:def8Algorithm}
	\Input{A graph $G (V, E)$  where vertices are partitioned into vulnerable $V_L$ and non-vulnerable $V_N$ and a positive constant $c$}
	\Output{A solution $S \subseteq V_N$}	
	\ForEach {$B\in \{0,1,\cdots,|V_L|\}$} {
		$S_B \gets DVV(G,B)$
	}
    $S \gets \argmax\limits_{S_B \in \{S_0, S_1, \cdots, S_{|V_L|}\}}(ASR(S_B,c))$\;
	\textbf{return} $S$
	
	\caption{$\mathit{DRVV}$ Algorithm}
\end{algorithm}

\begin{theorem}\label{thm:drvvProof}
Algorithm~\ref{alg:def8Algorithm} is a $\frac{1}{\Delta+1}\cdot (1-e^{-\frac{1}{\Delta}})$ -approximation for $\mathit{DRVV}$ problem.
\end{theorem}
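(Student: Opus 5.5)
Let $S^*\subseteq V_N$ be an optimal solution of $\mathit{DRVV}$ and set $B^*=\sigma_L(S^*)$, so that $0\le B^*\le |V_L|$. The plan is to compare $S^*$ with the single candidate $S_{B^*}$ computed by Algorithm~\ref{alg:def8Algorithm} in the iteration $B=B^*$. Since the algorithm returns the argmax of $ASR$ over all candidates, it suffices to show
\[
ASR(S_{B^*},c)\ \ge\ \tfrac{1}{\Delta+1}\bigl(1-e^{-\frac{1}{\Delta}}\bigr)\,ASR(S^*,c).
\]

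First, $S^*$ is a feasible solution of the $\mathit{DVV}$ instance with budget $B^*$, because $\sigma_L(S^*)=B^*$. Hence the optimum $\mathit{OPT}_{\mathit{DVV}}(B^*)$ is at least $\sigma_N(S^*)$. By Theorem~\ref{thm:dvvProof}, writing $r=\frac{1}{\Delta+1}(1-e^{-\frac{1}{\Delta}})$, we get
\[
\sigma_N(S_{B^*})\ \ge\ r\,\sigma_N(S^*).
\]
Feasibility of the output of Algorithm~\ref{alg:dvv} gives $\sigma_L(S_{B^*})\le B^*$. This needs care: the vertices in $S_0$ have no vulnerable neighbours, so they add nothing to $\sigma_L$. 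The $SUKP$ budget constraint bounds $|N(S_{SUKP})\cap V_L|$, and since $S\subseteq V_N$, these are exactly the vulnerable vertices dominated. So $\sigma_L(S_{B^*})\le \sigma_L(S^*)$.

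Combining the two bounds, and using $r\le 1$ and $c>0$:
\[
ASR(S_{B^*},c)=\frac{\sigma_N(S_{B^*})+c}{\sigma_L(S_{B^*})+c}\ \ge\ \frac{r\,\sigma_N(S^*)+c}{\sigma_L(S^*)+c}\ \ge\ \frac{r\,(\sigma_N(S^*)+c)}{\sigma_L(S^*)+c}=r\,ASR(S^*,c).
\]
The middle inequality uses the numerator bound and the denominator bound together. The last inequality is just $c\ge rc$.

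The main obstacle is making sure Theorem~\ref{thm:dvvProof} applies verbatim at $B=B^*$. In particular, the bound $\sigma_N(S_{B^*})\ge r\,\mathit{OPT}_{\mathit{DVV}}(B^*)$ must compare against an optimum that is at least $\sigma_N(S^*)$. This holds because $\mathit{OPT}_{\mathit{DVV}}(B^*)$ maximizes over all feasible sets, and $S^*$ is one of them. The edge case $B^*=0$ must also be covered. There $S^*$ consists only of vertices with no vulnerable neighbours, and $S_0$ already dominates everything such a set can dominate. So it remains to check that the $SUKP$ guarantee with budget $0$ is not degenerate, which holds since $S_0$ is added unconditionally. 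Finally, the algorithm makes $|V_L|+1$ calls to an $O(|V|^3)$ routine, so it runs in polynomial time.
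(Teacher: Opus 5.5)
Your proof is correct and is essentially the paper's argument: take $B^*=\sigma_L(S^*)$, apply Theorem~\ref{thm:dvvProof} at that budget (with $S^*$ feasible for the $\mathit{DVV}$ instance), use $\sigma_L(S_{B^*})\le B^*$, and note that the algorithm returns the best candidate. Your final chain is cleaner than the paper's: the paper writes the last step $\frac{r\,\sigma_N(S^*_{DVV})+c}{\sigma_L(S_{DVV})+c}=r\cdot ASR(S^*_{DRVV},c)$ as an equality, whereas you correctly derive it as an inequality from $\sigma_L(S_{B^*})\le\sigma_L(S^*)$ and $c\ge rc$.
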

\begin{proof}
 Let $S^*_{DRVV}$ be the optimal solution of $\mathit{DRVV}$ and $B^*=\sigma_L(S^*_{DRVV})$. Let $S_{DVV}$ be the solution returned by Algorithm~\ref{alg:dvv} and $S^*_{DVV}$ be the optimal solution of $\mathit{DVV}$ for the same instance $G,B^*$. If $S_{DRVV}$ is the solution of Algorithm~\ref{alg:def8Algorithm} we have the following.
As described in Algorithm~\ref{alg:def8Algorithm}, the $\mathit{DVV}$ algorithm is called for every value of $B$, including the parameter $B^*$ corresponding to the optimal solution $S^*_{DRVV}$. So, it stands $\sigma_N(S^*_{DRVV})=\sigma_N(S^*_{DVV})$ by the definition of $\mathit{DVV}$. The Algorithm~\ref{alg:dvv} returns a set $S_{DVV}$ such that $r \cdot \sigma_N(S^*_{DVV}) \leq \sigma_N(S_{DVV})$, (where $r\leq 1$ denotes the approximation ratio guaranteed by Algorithm~\ref{alg:dvv}), and the cost constraint is satisfied, i.e., $\sigma_L(S_{DVV})\leq B^* = \sigma_L(S^*_{DRVV})$. Because of these we have:
	\[ASR(S_{DRVV},c) \geq ASR(S_{DVV},c)=\frac{\sigma_N(S_{DVV}) + c}{\sigma_L(S_{DVV}) + c} \geq \frac{r \cdot \sigma_N(S^*_{DVV}) + c}{\sigma_L(S_{DVV}) + c} = r \cdot ASR(S^*_{DRVV},c)
 \]
 So
 \[
 ASR(S_{DRVV},c) \geq r \cdot ASR(S^*_{DRVV},c)\] 
\end{proof}

We now return to the $k\textit{-}Max \ \mathit{DRVV}$ problem. Before proceeding, let's recall the definition of the $Max \ CP$ problem.

\begin{definition}[$Max \ CP$]
Given a universe $U = \{e_1, e_2, \ldots, e_n\}$ of $n$ elements, a collection of subsets $\mathcal{X} = \{X_1, X_2, \ldots, X_m\}$ where each $X_i \subseteq U$, and an integer $k$, the goal is to select a subcollection $\mathcal{C} \subseteq \mathcal{X}$ such that: The collection has at most $k$ subsets $(|\mathcal{C}| \leq k)$ and the union of the sets in $\mathcal{C}$ maximizes the number of covered elements in $U$
$\Big(\mathcal{C} = \argmax_{\mathcal{A} \subseteq \mathcal{X}, \ |\mathcal{A}| \leq k} \left| \bigcup_{X_j \in \mathcal{A}} X_j \right| \Big)$. 
\end{definition}

\subsection{\texorpdfstring {Algorithm and Approximation for \boldmath $k\textit{-}\mathrm{Max\ \mathit{DRVV}}$}{Algorithm and Approximation for k-Max DRVV}}
\label{sec:k-algorithm}

Given a solution $S$ to the $\mathit{DRVV}$ problem, Algorithm~\ref{alg:def1Algorithm} constructs a $Max \ CP$ instance such that, for every vertex $u\in S$, a subset of vertices $X_u= N[u]\setminus V_L$ is created, which includes all neighboring vertices of $u$ that are not vulnerable, including $u$ itself. The universe $U=N[S]\setminus V_L$ includes the close neighborhood of $S$ excluding the vulnerable vertices. In this formulation, the vertices to be selected are restricted to $S$, whereas the dominated vertices correspond to the closed neighborhood of $S$. The algorithm then solves this $Max \ CP$ instance using the standard greedy algorithm \cite{nemhauser1978analysis} and returns the resulting subset. The $Max \ CP$ problem has an approximation ratio $(1-1/e)$  \cite{Hochbaum,nemhauser1978analysis} and this bound cannot be improved unless $P=NP$ \cite{feige1998threshold}.

Let $S$ be the output of the $\mathit{DRVV}$ algorithm and $S^*$ be the optimal solution of $\mathit{DRVV}$, $S_k$ be the output of the $k\textit{-}Max \ \mathit{DRVV}$ algorithm, $S^*_k$ be the optimal solution of $k\textit{-}Max \ \mathit{DRVV}$ and $S'_k$ be the optimal solution for $Max \ CP$. We prove bicriteria Lemma~\ref{lemma:l1} holds and then we convert this result to a true approximation.

\begin{lemma}
\label{lemma:l1}
 Algorithm~\ref{alg:def8Algorithm}  is a $(d,r)$ bicriteria algorithm for $k\textit{-}Max \ \mathit{DRVV}$ problem, where $r$ is the approximation ratio for the unbudgeted $\mathit{DRVV}$, $d=\lceil m /k \rceil$, $k \leq m < n$ and $m=|S|$, $S$ the solution provided by algorithm~\ref{alg:def8Algorithm}.

\end{lemma}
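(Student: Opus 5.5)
The plan is to read ``$(d,r)$ bicriteria'' in the usual sense: the set $S$ returned by Algorithm~\ref{alg:def8Algorithm} violates the cardinality budget $k$ by a factor of at most $d=\lceil m/k\rceil$, and its ratio value is within a factor $r$ of the budgeted optimum, i.e.\ $ASR(S,c)\ge r\cdot ASR(S^*_k,c)$. The proof then splits into a budget part and a value part.

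\textbf{Budget part.} Since $|S|=m$ and $d=\lceil m/k\rceil$, we have $m\le dk$. Hence $S$ uses at most $d$ times the allowed number of vertices. The assumption $k\le m$ ensures $d\ge 1$. If $m<k$, the set $S$ is already feasible and the value part below applies with $d=1$. The condition $m<n$ holds trivially because $S\subseteq V_N\subsetneq V$ whenever $V_L\neq\emptyset$.

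\textbf{Value part.} This is a comparison between the unbudgeted and budgeted optima.
\begin{enumerate}
\item Any feasible solution of $k\textit{-}Max\ \mathit{DRVV}$, in particular $S^*_k$, is a subset of $V_N$. It is therefore also feasible for $\mathit{DRVV}$, which gives $ASR(S^*,c)\ge ASR(S^*_k,c)$.
\item Theorem~\ref{thm:drvvProof} gives $ASR(S,c)\ge r\cdot ASR(S^*,c)$ with $r=\frac{1}{\Delta+1}(1-e^{-1/\Delta})$.
\item Chaining these yields $ASR(S,c)\ge r\cdot ASR(S^*_k,c)$.
\end{enumerate}
So $S$ attains an $r$-fraction of the budgeted optimum while using at most $dk$ vertices, which is exactly the $(d,r)$ bicriteria guarantee.

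\textbf{Main obstacle.} The only delicate point is making the bicriteria notion precise and consistent with how it will later be converted into a true approximation via $Max\ CP$. I will state explicitly that the first coordinate bounds $|S|/k$ and the second bounds the ratio loss. I will also note that Theorem~\ref{thm:drvvProof} compares against the unbudgeted optimum, which dominates the budgeted one; no additional loss is incurred in that step.
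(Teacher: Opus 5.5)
Your proposal is correct and follows the paper's proof essentially verbatim. The key step in both is the chain $r\cdot ASR(S^*_k,c)\le r\cdot ASR(S^*,c)\le ASR(S,c)$, justified by the feasibility of any budgeted solution for the unbudgeted $\mathit{DRVV}$ together with Theorem~\ref{thm:drvvProof}; your explicit observation $m\le dk$ only makes precise the budget coordinate that the paper leaves implicit, and it holds automatically because $d$ is defined from $|S|$ itself.
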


\begin{proof}
 Notice that $S$ does not include vulnerable vertices. In the case where $m \leq k$ we have a feasible solution for $k\textit{-}Max \ \mathit{DRVV}$ problem with the approximation ratio that of the $\mathit{DRVV}$ Algorithm~\ref{alg:def8Algorithm}. In the case where $m=n$, it means that we have an instance without vulnerable vertices and the approximation is that of $Max \ CP$. So, we focus
in the case where $k \leq m < n$. By definition of $S^*$, it holds $ASR(S^*_k,c)\leq ASR(S^*,c)$ and $r \cdot ASR(S^*,c)\leq ASR(S,c)$. We have:
\[
r \cdot ASR(S^*_k,c) \leq r \cdot ASR(S^*,c)\leq ASR(S,c)\Rightarrow
 r \cdot ASR(S^*_k,c)\leq ASR(S,c)
\]

 Since $\mathit{DRVV}$ algorithm gives a subset S with ASR within $r$ of the optimum and this subset has a size $m < n$, the lemma stands.
\end{proof}

\begin{algorithm}[H]
	\SetKwInOut{Input}{Input}
	\SetKwInOut{Output}{Output}
	\DontPrintSemicolon
	
	\Input{A graph $G=(V, E)$ where vertices are partitioned into non-vulnerable $V_N$ and vulnerable $V_L$, an integer $k \leq |V|$ and a positive constant $c$}
	\Output{A subset $S_k$, such that $|S_k| \leq k$}
		
	$S \gets DRVV(G)$\;
	\uIf{$|S| \leq k$}
    {$S_k \gets S$}
    \Else{
    $U\gets N[S] \setminus V_L$\;
    \ForEach {$u\in S$}{
    $X_u = N[u]\setminus V_L$\;
    }
    $\mathcal{X}=\bigcup_{u\in S}\{X_u\}$\;
	$C=Max \ CP(U,\mathcal{X},k)$\;
    $S_k \gets \bigcup_{X_u \in C}\{u\}$\;
	}
	\textbf{return} $S_k$
	
	\caption{$k\textit{-}Max \ \mathit{DRVV}$}
	\label{alg:def1Algorithm}
\end{algorithm}
 \vspace{1em}

\textbf{Converting the bicriteria approximation to a true approximation.}

We need to get a subset of $S$, of size at most $k$ for a standard approximation algorithm. 

\begin{theorem}
 Algorithm~\ref{alg:def1Algorithm} is a $\frac{\rho \cdot r}{d}$ approximation for $k\textit{-}Max \ \mathit{DRVV}$ problem, where $\rho$ is the approximation ratio for $Max \ CP$, $r$ is the approximation ratio for $\mathit{DRVV}$, $d=\lceil m /k \rceil$, $k \leq m < n$ and $m=|S|$ with $S$ the solution provided by algorithm~\ref{alg:def8Algorithm}.

\end{theorem}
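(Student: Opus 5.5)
The plan is to reduce the statement to two facts. First, Lemma~\ref{lemma:l1} gives $r\cdot ASR(S^*_k,c)\le ASR(S,c)$. Second, restricting $S$ to the $k$ vertices chosen by $Max\ CP$ loses at most a factor $\rho/d$ in the numerator and gains nothing in the denominator. The case $|S|\le k$ is immediate from Lemma~\ref{lemma:l1} and Theorem~\ref{thm:drvvProof}, since then $S_k=S$ and the ratio is at least $r\ge \rho r/d$. So assume $k\le m<n$ and $d=\lceil m/k\rceil$.

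\textbf{Step 1 (numerator).} Partition $S$ arbitrarily into $d$ groups $S^{(1)},\dots,S^{(d)}$, each of size at most $k$. The sets $N[S^{(j)}]\setminus V_L$ cover $U=N[S]\setminus V_L$, so by averaging some group $S^{(j)}$ satisfies $|N[S^{(j)}]\cap V_N|\ge \sigma_N(S)/d$. That group is a feasible $Max\ CP$ solution. Hence the optimum $S'_k$ of the $Max\ CP$ instance built in Algorithm~\ref{alg:def1Algorithm} covers at least $\sigma_N(S)/d$ elements, and the greedy output satisfies $\sigma_N(S_k)\ge \rho\,\sigma_N(S'_k)\ge \frac{\rho}{d}\sigma_N(S)$. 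Here I use that $X_u=N[u]\setminus V_L$, so the number of covered elements equals $\sigma_N$ of the chosen vertices.

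\textbf{Step 2 (denominator).} Since $S_k\subseteq S$, we have $N[S_k]\subseteq N[S]$, so $\sigma_L(S_k)\le\sigma_L(S)$.

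\textbf{Step 3 (combine).} Using $\rho/d\le 1$ and $c>0$,
\[
ASR(S_k,c)=\frac{\sigma_N(S_k)+c}{\sigma_L(S_k)+c}\ge\frac{\frac{\rho}{d}\sigma_N(S)+c}{\sigma_L(S)+c}\ge \frac{\rho}{d}\cdot\frac{\sigma_N(S)+c}{\sigma_L(S)+c}=\frac{\rho}{d}ASR(S,c).
\]
Lemma~\ref{lemma:l1} then gives $ASR(S_k,c)\ge \frac{\rho r}{d}ASR(S^*_k,c)$. The output has at most $k$ vertices, all in $V_N$, so it is feasible.

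The main obstacle is Step 1. The issue is making sure the $Max\ CP$ objective coincides exactly with $\sigma_N$, which it does because vulnerable vertices are removed from $U$ and from every $X_u$. The remaining work is the averaging over the $d$ groups. The inequality $\frac{a\alpha+c}{b+c}\ge\alpha\frac{a+c}{b+c}$ for $\alpha\le1$ is routine.
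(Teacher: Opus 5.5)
Your proposal is correct and follows essentially the same route as the paper. Both arguments partition $S$ into $d$ groups of size at most $k$, use subadditivity and averaging to find a group with $\sigma_N\ge\sigma_N(S)/d$, apply the $Max\ CP$ guarantee, use $\sigma_L(S_k)\le\sigma_L(S)$, absorb the additive constant $c$ (you via $\rho/d\le 1$, the paper via $d\ge 1$ and $\rho c\le c$ separately), and conclude with Lemma~\ref{lemma:l1}. The only difference is ordering: you first isolate $ASR(S_k,c)\ge\frac{\rho}{d}ASR(S,c)$ and then invoke Lemma~\ref{lemma:l1}, whereas the paper carries $r\cdot ASR(S^*_k,c)$ through the chain from the start.
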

\begin{proof}
 
Let $m=|S| < n$ and $d=\lceil \frac{m}{k} \rceil$. By definition of $\sigma_N$, for any partition  $\{S_1,S_2,\cdots,S_d\}$ of $S$, such that $|S_i|\leq k$ for all $i=\{1,\cdots,d\}$, and $|S_1|+|S_2|+\cdots+|S_{d}|=m$, it holds $\sigma_N(S)\leq \sigma_N(S_1)+\cdots \sigma_N(S_d)$. Clearly, there is a $S_{i^*}$,$1 \leq i^* \leq d$ such that $\sigma_N(S_1)+\cdots+\sigma_N(S_d)\leq d\cdot\sigma_N(S_{i^*})$. We have:

\[
\sigma_N(S) \leq d \cdot \sigma_N(S_{i^*})\Rightarrow
\frac{\sigma_N(S)+c}{\sigma_L(S)+c} \leq \frac{d \cdot\sigma_N(S_{i^*})+c}{\sigma_L(S)+c}\Rightarrow
\]
\begin{equation}
\label{eq:2}
ASR(S,c) = \frac{\sigma_N(S)+c}{\sigma_L(S)+c} \leq \frac{d \cdot\sigma_N(S_{i^*})+c}{\sigma_L(S)+c}
\end{equation}

By Lemma~\ref{lemma:l1} it holds $ r \cdot ASR(S^*_k,c)\leq ASR(S,c)$. By Equation~\ref{eq:2}, we have:
\begin{equation}
\label{eq:3}
r\cdot ASR(S^*_k,c)\leq \frac{d \cdot\sigma_N(S_{i^*})+c}{\sigma_L(S)+c}\leq d \cdot\frac{\sigma_N(S_{i^*})+c}{\sigma_L(S)+c}
\end{equation}

By definition of $S'_k$, it holds $\sigma_N(S_{i^*})\leq\sigma_N(S'_k)$. Also, by \cite{Hochbaum,nemhauser1978analysis} it holds $\rho\cdot \sigma_N(S'_k)\leq\sigma_N(S_k)$. By Equation~\ref{eq:3}, we have:

\begin{equation}
\label{eq:4}
\frac{\rho\cdot r}{d}\cdot ASR(S^*_k,c)\leq \frac{\rho\cdot\sigma_N(S_{i^*})+\rho\cdot c}{\sigma_L(S)+c}\leq \frac{\rho\cdot\sigma_N(S'_k)+c}{\sigma_L(S)+c} \leq
\frac{\sigma_N(S_k)+c}{\sigma_L(S)+c}
\end{equation}

Furthermore, by definition of $\sigma_L$ it holds $\sigma_L(S_k) \leq \sigma_L(S)$ since $S_k \subseteq S$. So by Equation~\ref{eq:4}, we have:

\[
\frac{\rho \cdot r}{d} ASR(S^*_k,c) \leq \frac{\sigma_N(S_k)+c}{\sigma_L(S)+c} \leq
\frac{\sigma_N(S_k)+c}{\sigma_L(S_k)+c}=ASR(S_k,c) 
\]
\end{proof}

Finally, under the current best known results for $Max \ CP$ and $\mathit{DRVV}$ we obtain for $k\textit{-}Max \ \mathit{DRVV}$ problem the approximation ratio $\frac{1}{\Delta+1}\cdot (1-e^{-\frac{1}{\Delta}}) \cdot (1-\frac{1}{e})\cdot \frac{k}{n}$. We have the following corollary.

\begin{corollary} 
For the class of graphs with bounded maximum degree $\Delta$, Algorithm~\ref{alg:def1Algorithm} achieves
an approximation ratio of $O(k/n)$.
\end{corollary}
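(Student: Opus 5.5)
The corollary follows by substituting the known ratios into the preceding theorem and then using bounded degree to turn each factor into a constant.

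\textbf{Ingredients.} The theorem above gives ratio $\frac{\rho\cdot r}{d}$ with $d=\lceil m/k\rceil$, where $m=|S|$ is the size of the solution returned by Algorithm~\ref{alg:def8Algorithm}. We instantiate:
\begin{itemize}
\item $\rho = 1-\frac1e$ from the greedy algorithm for $Max\ CP$ \cite{nemhauser1978analysis};
\item $r=\frac{1}{\Delta+1}(1-e^{-1/\Delta})$ from Theorem~\ref{thm:drvvProof}.
\end{itemize}
When $\Delta$ is a fixed constant, both $\rho$ and $r$ are positive constants independent of $n$. So it remains to show $\frac1d=\Omega(k/n)$.

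\textbf{Bounding $d$.} We handle the cases separately.
\begin{itemize}
\item If $m\le k$, the algorithm returns $S$ itself. By Lemma~\ref{lemma:l1} this is an $r$-approximation, which is $\Omega(1)$ and hence certainly $\Omega(k/n)$.
\item If $k\le m<n$, then $d=\lceil m/k\rceil\le m/k+1\le 2m/k<2n/k$, using $m\ge k$. Hence $\frac1d> \frac{k}{2n}$.
\end{itemize}
The edge case $m=n$ (no vulnerable vertices) was already noted in Lemma~\ref{lemma:l1} to give the $Max\ CP$ ratio. In any event $m\le n$ gives $d\le 2n/k$ here as well.

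\textbf{Conclusion.} Combining, Algorithm~\ref{alg:def1Algorithm} achieves ratio at least
\[
\frac{1}{2}\cdot\frac{1}{\Delta+1}\left(1-e^{-\frac{1}{\Delta}}\right)\left(1-\frac1e\right)\frac{k}{n}=\Omega\!\left(\frac{k}{n}\right),
\]
with the hidden constant depending only on $\Delta$. This matches the stated $O(k/n)$ ratio, read as a guarantee of order $k/n$.

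The only delicate point is the ceiling in $d$: it must be absorbed into a factor of $2$ rather than written as $d=m/k$. It is also worth noting explicitly that $(1-e^{-1/\Delta})\ge \frac{1}{2\Delta}$ for $\Delta\ge1$, so that $r$ is bounded below by a positive constant for fixed $\Delta$.
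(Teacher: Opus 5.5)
Your proposal is correct and follows the paper's route: substitute $\rho = 1-\frac{1}{e}$ and $r=\frac{1}{\Delta+1}(1-e^{-1/\Delta})$ into the $\frac{\rho r}{d}$ bound, and note that both are positive constants once $\Delta$ is fixed. Your handling of the ceiling ($d \le 2m/k < 2n/k$) is more careful than the paper's. The paper writes the last factor directly as $\frac{k}{n}$, even though $\lceil m/k\rceil$ can exceed $n/k$ (e.g.\ $n=8$, $k=3$, $m=7$); your separate treatment of the case $m\le k$ via Lemma~\ref{lemma:l1} is also sound.
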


\section{\texorpdfstring {Dominating set with vulnerable vertices ($DSV$)}{Dominating set with vulnerable vertices (DSV)}}
\label{sec:dsv} 

In contrast to the previous problems where the focus was on maximizing a specific ratio, in this section we explore a different variant. The objective is no longer to compute a ratio, but rather to find a dominating set for the non-vulnerable vertices which dominates the minimum number of vulnerable vertices.
This problem formulation is closely related to the Red-Blue Set Cover ($RBSC$) problem, where the goal is to cover all the "blue" elements (representing the non-vulnerable vertices) while keeping the coverage of "red" elements (representing the vulnerable vertices) to an absolute minimum. We begin with a formal definition of the Red - Blue Set Cover Problem:

\begin{definition}[$RBSC$]
	Given a universe of elements partitioned into red $R$ and blue $B$, and a collection of subsets $\mathcal{S} = \{S_1, S_2, \dots, S_m\}$ with $S_i \subseteq R \cup B$, find a subcollection $\mathcal{S}' \subseteq \mathcal{S}$ such that $\mathcal{S}'$ is a set cover for $B$ and the number of covered red elements is minimum.
\end{definition}

A dominating set $D$ of a graph $G(V,E)$ is a subset of vertices $D \subseteq V$ such that every vertex not in $D$ is adjacent to at least one vertex in $D$. In $DSV$ the aim is to find a  Dominating Set for the non-vulnerable vertices which dominates the minimum number of vulnerable vertices. We formally define the Dominating Set with Vulnerable Vertices ($DSV$) problem:

\begin{definition}[$DSV$]
	Given a graph $G(V, E)$, where vertices are partitioned in non-vulnerable $V_N$ and vulnerable $V_L$, find a subset of vertices $S\subseteq V_N$ such that S is dominating set for $V_N$ and the number of dominated vulnerable vertices is minimum.
\end{definition}

The $DSV$ problem is at least as hard as Minimum Dominating Set ($Min-DS$). Indeed, every instance $G(V,E)$ of $Min-DS$ can be transformed to an instance of $DSV$ by adding a neighboring vulnerable vertex to each vertex of $G$, creating $G'$. The solution of $DSV$ on $G'$  gives the solution of $Min-DS$ for $G$ and vice versa.

\begin{algorithm}[H]
    \label{alg:DS_vul_nodes}
	\SetKwInOut{Input}{Input}
	\SetKwInOut{Output}{Output}
	\DontPrintSemicolon
	
	\Input{A graph $G(V, E)$, where vertices are partitioned in non-vulnerable $V_N$ and vulnerable $V_L$}
	\Output{A feasible solution $S$ for $DSV$}
	
	\vspace{0.5mm}
	
	$S_0 \leftarrow \emptyset$\;
    $R \leftarrow V_L$\;
    $B \leftarrow V_N$\;
	\ForEach{$\upsilon \in B$}{
		\uIf{$\sigma_L(\upsilon)=0$}{
			$S_0 \leftarrow S_0 \cup \{\upsilon\}$\;
            $B \leftarrow B \setminus \upsilon $\;
		}\Else{
			$X_\upsilon \leftarrow N[\upsilon]$\;
		}
	}
 $S_{RBSC} \leftarrow  RBSC(R,B, \bigcup_{\upsilon \in B}\{X_\upsilon\})$\;
 
	$S \leftarrow S_0 \cup S_{RBSC}$\;
	\textbf{return} $S$
	
	\caption{$DSV$}
	\label{alg:def3Algorithm}
\end{algorithm}

Algorithm~\ref{alg:DS_vul_nodes} reduces the $DSV$ problem to the Red-Blue Set Cover ($RBSC$) problem. Initially, the sets of vulnerable ($V_L$) and non-vulnerable ($V_N$) vertices are mapped to the red ($R$) and blue ($B$) elements of the $RBSC$ instance, respectively. We first select all non-vulnerable vertices with no vulnerable neighbors (i.e., $|\sigma_L(\upsilon)| = 0$) and add them to an initial solution set $S_0$. These vertices are then removed from the set of blue elements $B$ that need to be covered.

For each of the remaining vertices $\upsilon \in B$, we define a corresponding subset $X_\upsilon$ equal to its closed neighborhood $N[\upsilon]$. These subsets constitute the available sets in the $RBSC$ formulation, covering both blue and red elements. Finally, the $RBSC$ algorithm is called on the constructed instance $(R, B, \bigcup_{\upsilon \in B}\{X_\upsilon\})$ to compute a set cover $S_{RBSC}$. The algorithm returns the union $S_0 \cup S_{RBSC}$ as a feasible solution for the $DSV$ problem.

Let $\Delta_N = \max_{\upsilon \in V_N \wedge \sigma_L(\upsilon)>0}\{|N(\upsilon)|+1-|\sigma_L(\upsilon)|\}$ and
$\Delta_{L} = \max_{\upsilon \in V_L}|N(\upsilon)|$. By utilizing the reduction to the $RBSC$ problem, we can apply the theoretical known bounds to our formulation. The approximation of $RBSC$ fundamentally relies on the Weighted Set Cover ($WSC$) problem. While earlier results bounded the $WSC$ approximation by $1 + \ln(b)$, where $b$ is the maximum set size, the tighter bound of $H(b) - \frac{1}{2}$ established by \cite{duh1997approximation} provides a more accurate guarantee. Based on this, we derive the following theoretical results for the $DSV$ problem. 

\begin{theorem} Algorithm \ref{alg:def3Algorithm} is a
   $ \Delta_{L}\cdot(H(\Delta_{N})-\frac{1}{2}) $
    approximation for the $DSV$ problem, where $H$ is the harmonic function.
\end{theorem}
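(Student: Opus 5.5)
We plan to prove the $\Delta_{L}\cdot(H(\Delta_{N})-\frac{1}{2})$ bound by reducing the $RBSC$ instance built in Algorithm~\ref{alg:def3Algorithm} to a Weighted Set Cover instance, running the weighted greedy algorithm on it, and using the bound $H(b)-\frac12$ of \cite{duh1997approximation}. The proof has three steps: fixing the subroutine, showing the vertices of $S_0$ cost nothing, and relating red coverage to set weights.

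\textbf{Step 1 (the subroutine).} We take the call $RBSC(R,B,\cdot)$ to be the following procedure. Assign each set $X_\upsilon$ the weight $w(\upsilon)=\sigma_L(\upsilon)=|N(\upsilon)\cap V_L|$, which is positive because $\sigma_L(\upsilon)>0$ for every $\upsilon$ that stays in $B$. Then run weighted greedy set cover on the blue elements only, where the blue part of $X_\upsilon$ is $N[\upsilon]\cap B$.

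\textbf{Step 2 ($S_0$ is free).} Every vertex in $S_0$ has no vulnerable neighbour, so it dominates no red element and adds nothing to the cost. These vertices are already dominated by themselves. The only point to check is that the remaining blue elements can be covered. Each remaining $\upsilon\in B$ lies in its own set $X_\upsilon$, so the instance is feasible. Moreover, some optimal $DSV$ solution $S^*$ contains $S_0$, and $S^*\setminus S_0$ covers $B$, so $OPT_{RBSC}\le OPT_{DSV}$.

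\textbf{Step 3 (red coverage versus weight).} For any collection $\mathcal{A}$ of chosen vertices, let $\mathrm{red}(\mathcal{A})=\big|\bigcup_{\upsilon\in\mathcal{A}}N(\upsilon)\cap V_L\big|$ be the number of red elements it covers. We will show
\[
\mathrm{red}(\mathcal{A})\;\le\; w(\mathcal{A})\;\le\;\Delta_L\cdot \mathrm{red}(\mathcal{A}).
\]
The lower bound is immediate, since a union is at most the sum of its parts. For the upper bound, note that a red vertex $u$ is counted in $w(\mathcal{A})$ once for each of its non-vulnerable neighbours in $\mathcal{A}$. That number is at most $|N(u)|\le\Delta_L$. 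This is the same double-counting argument as in Lemma~\ref{lemma:dvv_inequality}.

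\textbf{Combining the bounds.} The blue part of each set satisfies $|N[\upsilon]\cap V_N|=|N(\upsilon)|+1-\sigma_L(\upsilon)\le\Delta_N$, and this is exactly the definition of $\Delta_N$ in this section. So the greedy algorithm returns a cover whose weight is at most $(H(\Delta_N)-\frac12)\,OPT_{WSC}$. Let $\mathcal{A}^*$ be the sets chosen by an optimal $RBSC$ solution. They form a valid weighted cover, so by Step 3
\[
OPT_{WSC}\le w(\mathcal{A}^*)\le \Delta_L\cdot OPT_{RBSC}.
\]
Putting this together with the lower bound of Step 3 and with Step 2 gives
\[
\mathrm{red}(S)\le w(S_{RBSC})\le \left(H(\Delta_N)-\tfrac12\right)\Delta_L\, OPT_{RBSC}\le \Delta_L\left(H(\Delta_N)-\tfrac12\right)OPT_{DSV}.
\]

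The main obstacle is pinning down what the black-box $RBSC$ call is. The bound only holds for the weighted-greedy realisation with weights $\sigma_L(\upsilon)$, so the proof must state this choice explicitly. A second point needs checking: when the bound $H(b)-\frac12$ of \cite{duh1997approximation} is applied, the relevant set size is the number of blue elements in a set, not its whole closed neighbourhood.
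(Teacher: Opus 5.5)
Your route is the paper's route. Its proof only cites Lemma~3.2 of \cite{PELEG200755}, which is exactly your weighting $w(\upsilon)=\sigma_L(\upsilon)$ plus the double count $\mathrm{red}\le w\le \Delta_L\cdot\mathrm{red}$, combined with a WSC ratio of $H(b)-\frac12$. Your Step~3 and the bound $|N[\upsilon]\cap V_N|\le\Delta_N$ are fine. Two other steps fail as written.

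\textbf{Greedy does not give $H(\Delta_N)-\frac12$.} On instances with at most $b$ blue elements per set, weighted greedy has ratio exactly $H(b)$, and this is tight. Take one set covering all $b$ elements with weight $1+\varepsilon$ and singletons $\{e_i\}$ of weight $1/i$; greedy picks every singleton and pays $H(b)$. The $H(b)-\frac12$ bound of \cite{duh1997approximation} comes from semi-local optimization for (unweighted) $k$-set cover, not from greedy. Also, for $b=1$ the factor $H(1)-\frac12=\frac12$ is below $1$, so it cannot be any algorithm's guarantee. With your Step~1, the chain therefore yields only $\Delta_L\cdot H(\Delta_N)$. The paper fixes no subroutine: it plugs ``a WSC algorithm with ratio $H(b)-\frac12$'' into Peleg's lemma as a black box, and you would have to do the same to reach the stated constant. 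That black box is only documented for unweighted instances, which is a weakness of the paper's claim too.

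\textbf{$OPT_{RBSC}\le OPT_{DSV}$ (Step~2) fails for the instance Algorithm~\ref{alg:def3Algorithm} actually builds.} The algorithm removes from $B$ only the vertices of $S_0$, not the vertices they dominate. It also offers no sets $X_\upsilon$ for $\upsilon\in S_0$. So a blue vertex that $S^*$ dominates only through $S_0$ must still be covered, at red cost, by some $X_\upsilon$ with $\upsilon\notin S_0$, and $S^*\setminus S_0$ need not cover $B$.

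For a concrete failure, take a star with centre $s$ and non-vulnerable leaves $b_1,\dots,b_t$, each $b_i$ with a private vulnerable neighbour $r_i$. Then $S_0=\{s\}$ and $OPT_{DSV}=0$. But $b_i$ lies only in $X_{b_i}$, so the algorithm pays $t$, and no multiplicative ratio can hold for the algorithm as literally written.

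The fix is $B\leftarrow V_N\setminus N[S_0]$, or equivalently adding the zero-weight sets $X_\upsilon$ for $\upsilon\in S_0$. Then every $b\in B$ is dominated by a vertex of $S^*\setminus S_0$, the blue parts of the remaining sets still have size at most $\Delta_N$, and your Steps~2 and~3 go through. The paper's one-line proof silently assumes this corrected instance as well, via its remark that the reduction ``preserves the objective function exactly''.
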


\begin{proof}
 The result follows from the approximation result of $WSC$ and Lemma 3.2 (\cite{PELEG200755}).
\end{proof}
\begin{theorem} Algorithm \ref{alg:def3Algorithm} is a
   $2\sqrt{|V|\cdot(H(\Delta_{N})-\frac{1}{2}})$
    approximation for the $DSV$ problem, where $H$ is the harmonic function.
\end{theorem}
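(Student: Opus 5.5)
We plan to follow the Peleg-style argument for $RBSC$ (\cite{PELEG200755}), which gives a $2\sqrt{|U|\log|B|}$ bound, and to replace the $\log|B|$ factor by the tighter weighted set cover guarantee $H(\Delta_N)-\frac{1}{2}$ of \cite{duh1997approximation}. The key observation is that in the instance built by Algorithm~\ref{alg:def3Algorithm} every set $X_\upsilon$ contains at most $\Delta_N$ blue elements; this follows from the definition of $\Delta_N$ as the largest number of non-vulnerable vertices in $N[\upsilon]$. The universe is $U=R\cup B\subseteq V$, so $|U|\le|V|$. The vertices placed in $S_0$ dominate no vulnerable vertex. They therefore add nothing to the cost, and there is an optimal $DSV$ solution containing $S_0$. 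Hence it suffices to bound the $RBSC$ part against the optimum $OPT$ of the $RBSC$ instance.

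The algorithm we analyse is Peleg's threshold procedure, assumed to be the $RBSC$ subroutine. For a threshold $X$ (guessed over the polynomially many values $1,\dots,|R|$), we discard every set containing more than $X$ red elements. We then discard every red element lying in at least $\sqrt{|U|/\beta}$ of the remaining sets, where $\beta=H(\Delta_N)-\frac{1}{2}$. Each surviving set gets weight equal to its number of remaining red elements, and we run greedy weighted set cover on the blue elements. The final output is the best cover over all thresholds.

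Now fix $X=OPT$. No set of the optimal cover is discarded, so the restricted instance is feasible. Its weighted optimum is at most $\sum_{S\in OPT}w(S)$. Each remaining red element has degree below $\sqrt{|U|/\beta}$, which gives $\sum_{S\in OPT}w(S)\le OPT\cdot\sqrt{|U|/\beta}$. By \cite{duh1997approximation} the greedy cover then has weight at most $\beta\, OPT\sqrt{|U|/\beta}=OPT\sqrt{|U|\beta}$, and this bounds the number of non-discarded red elements it covers.

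The main obstacle is accounting for the high-degree red elements that were discarded, since they are still covered in reality. There are at most $|U|\cdot X/\sqrt{|U|/\beta}$ incidence-heavy reds, because the total incidence between reds and surviving sets is at most $(\#\text{sets})\cdot X$. We therefore need to bound the number of sets involved. We would follow Peleg's argument in \cite{PELEG200755} (his Lemma 3.2) exactly: the discarded reds number at most $\sqrt{|U|\beta}\cdot OPT$, which comes from counting incidences only along a minimal cover, whose size is at most $|B|\le|U|$. Adding this to the bound on the non-discarded reds gives a total of at most $2\sqrt{|U|\beta}\,OPT\le 2\sqrt{|V|(H(\Delta_N)-\frac{1}{2})}\,OPT$. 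Getting this incidence counting right, with $\beta$ substituted for $\log|B|$ throughout Peleg's balancing, is where the care is needed. The balancing choice of the degree threshold $\sqrt{|U|/\beta}$ is what produces the factor $2$.
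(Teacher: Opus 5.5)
Your plan is the paper's argument unpacked: the paper just plugs the weighted-set-cover ratio into Peleg's Theorem~3.5. As written, though, two steps fail and a third is mis-justified.

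\textbf{The reduction to the $RBSC$ optimum.} From ``there is an optimal $DSV$ solution $S^*\supseteq S_0$'' you infer that comparing with the $RBSC$ optimum suffices. That inference needs $S^*\setminus S_0$ to cover $B$. Algorithm~\ref{alg:def3Algorithm}, however, deletes only $S_0$ itself from $B$, not the non-vulnerable vertices that $S_0$ already dominates, and it offers no set $X_s$ for $s\in S_0$. Take the path $s-u-r$ with only $r$ vulnerable. Here $\{s\}$ is optimal with cost $0$, yet the instance has $B=\{u\}$ and the single set $X_u\ni r$, so the output costs $1$. Hanging many such $u_i$ on $s$, each with a private vulnerable leaf, and adding a component that forces cost $1$ makes the ratio grow linearly in $|V|$. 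This happens even with an exact $RBSC$ solver, so the statement fails for the algorithm as written. The fix is to remove all of $N[S_0]\cap V_N$ from $B$ while keeping the sets $X_\upsilon$ for every $\upsilon\in V_N\setminus S_0$. Then $S^*\setminus S_0$ is feasible, the cost is preserved, there are at most $|V|$ sets, and each set has at most $\Delta_N$ blue elements. The paper's one-line proof passes over this as well.

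\textbf{The set-cover step and the heavy reds.} Greedy's guarantee for weighted set cover with at most $\Delta_N$ blue elements per set is $H(\Delta_N)$, and this is tight in general. The $H(k)-\frac12$ bound of Duh and F\"urer comes from semi-local optimization for \emph{unweighted} $k$-set cover. So ``the greedy cover then has weight at most $\beta\cdots$'' is false. With greedy, the same balancing (threshold $\sqrt{|V|/H(\Delta_N)}$) gives only $2\sqrt{|V|\,H(\Delta_N)}$. To reach the stated constant you must assume a weighted set-cover routine of ratio $H(\Delta_N)-\frac12$, which is exactly what the paper's appeal to ``the approximation result of $WSC$'' presupposes.

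Finally, minimal covers play no role in counting heavy reds. Heavy reds are defined by their frequency among \emph{all} surviving sets, so incidences must be summed over all of them. The correct reason is that there is one set per vertex, hence at most $|V|$ sets, each with at most $X$ reds. So at most $|V|X/\tau=X\sqrt{|V|\beta}$ reds have frequency $\ge\tau=\sqrt{|V|/\beta}$. Use $|V|$ rather than $|U|$ in the threshold: after the fix the number of sets can exceed $|U|$, and the number of sets is the quantity that actually enters the bound.
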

\begin{proof}
 The result follows from the approximation result of $WSC$ and Theorem 3.5 (\cite{PELEG200755}).
\end{proof}

Since our reduction preserves the objective function exactly, all approximation bounds and hardness results for the Red-Blue Set Cover problem directly apply to the $DSV$ problem. 

\section{\texorpdfstring {Vertex Cover with Vulnerable Edges ($\mathit{VCVE}$)}{Vertex Cover with Vulnerable Edges (VCVE)}}
\label{sec:vcve} 
Having explored vulnerability constraints in the context of vertex domination, we now shift our focus to coverage problems by considering the case of vulnerable edges. The natural counterpart to the dominating set problems discussed earlier is the Vertex Cover with Vulnerable Edges ($VCVE$) problem (see Fig.~\ref{fig:vcve}). 

\begin{definition}[$\mathit{VCVE}$]
	Given a graph $G =(V, E)$ where edges are partitioned into non-vulnerable $E_N$ and vulnerable $E_L$, find a set of vertices $S\subseteq V$ covering all $E_N$ edges and the minimum number of $E_L$ edges. 
\end{definition}

The $\mathit{VCVE}$ problem is as hard as the Minimum Vertex Cover (Min-VC). Indeed, every instance $G=(V,E)$ of Min-VC can be transformed to an instance of $\mathit{VCVE}$ by adding a new vertex $\upsilon$ and a new vulnerable edge $(\upsilon,u)$ for each vertex $u$ of $G$. Let $G'$ be the created graph. Each solution for $\mathit{VCVE}$ on $G'$ directly yields a Min-VC solution for $G$, and vice versa.

To provide an approximation algorithm for $VCVE$, we draw inspiration from the analytical framework of the classical Red-Blue Set Cover ($RBSC$) problem. While the standard $RBSC$ analysis relies on a reduction to the Weighted Set Cover problem, our methodology strategically adapts this concept by establishing a reduction to the Weighted Vertex Cover problem. This tailored approach allows us to exploit the specific structural properties of vertex covers rather than generic sets. Consequently, this adaptation yields a significantly stronger theoretical guarantee: whereas the general $RBSC$ problem typically suffers from non-constant approximation bounds, our specific reduction enables us to achieve a strictly constant-factor approximation.

\begin{figure}
    	\centering
    	\includegraphics[scale=1.5]{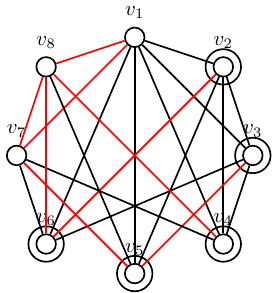}
    	\caption{A $VCVE$ instance. Red edges are vulnerable while the other ones are all non vulnerable. The circled vertices define a feasible solution $S=\{v_2, v_3, v_4, v_5, v_6\}$ with $|E_L(S)|=5$.}
    	\label{fig:vcve} 
    \end{figure}
    
  We adopt the following notation: For any vertex
$\upsilon \in V$, let $E(\upsilon)$ denote
the set of edges incident to $\upsilon$.
Furthermore, for a set of
vertices $S \subseteq V$,
$E_L(S)$ represents the set of
vulnerable edges covered by $S$, defined
as: $E_L(S)=\{(\upsilon,u)\in E_L: \upsilon\in S \lor u \in S\}$.\\

\textbf{The Algorithm.}
In Algorithm~\ref{alg:def4Algorithm}, the Weighted Vertex Cover ($WVC$) \cite{BARYEHUDA1981198} problem is used as a subroutine. In $WVC$, given an undirected graph
$G=(V,E)$ with a positive weight function $w: V \to \mathbb{R}^+$ assigning a positive weight $w(v)$ to each vertex $v \in V$, the goal is to find a subset $C\subseteq V$ such that every edge $e\in E$ is incident to at least one vertex $v\in C$ and the total weight of the selected vertices, denoted as $W(C)$ is minimized. In the algorithm, we first select all vertices with no adjacent vulnerable edges and add them to the solution. These vertices are then removed from the initial graph. Each remaining vertex is assigned a weight corresponding to the number of vulnerable edges incident to it. Vulnerable edges and isolated vertices are removed, and the resulting graph $G'$ is used in $WVC$ algorithm to provide a feasible solution.
The remaining graph $G'$ might be disconnected. We apply the $WVC$ algorithm for each of the components of $G'$.\\

\begin{algorithm}[H]
	\SetKwInOut{Input}{Input}
	\SetKwInOut{Output}{Output}
	\DontPrintSemicolon
	
	\Input{A graph $G=(V, E)$ where edges are partitioned into non-vulnerable $E_N$ and vulnerable $E_L$}
	\Output{A feasible solution $S$ for $\mathit{VCVE}$}

    $S \leftarrow \emptyset$,
	$S_0 \leftarrow \emptyset$,
	$V' \leftarrow V$,
        $E' \leftarrow E$\;
	\ForEach{$\upsilon \in V$}{
		
		\If{$E(\upsilon) \cap E_L \neq \emptyset$}{
			$w(\upsilon) \leftarrow |E(\upsilon)\cap E_L|$
		}\Else{
			$V' \leftarrow V' \setminus \{\upsilon\} $\\
			$E' \leftarrow E' \setminus E(\upsilon) $\\
			$S_0\leftarrow S_0 \cup \{\upsilon\}$
		}
	}
	
	$E' \leftarrow E' \setminus E_L $\\
	\ForEach{$\upsilon \in V'$}{
		\If{$E(\upsilon) = \emptyset$}{
			$V' \leftarrow V' \setminus \{\upsilon\} $
		}
	}
    Let $G'=(V',E')$\;
	$S_{WVC} \leftarrow WVC(G')$\;
 $S \leftarrow S_0 \cup S_{WVC}  $\;
	\textbf{return} $S$
	
	\caption{VCVE}
	\label{alg:def4Algorithm}
\end{algorithm}

\begin{lemma}\label{lemma:setCover}
For all $S \subseteq V$: $|E_L(S)|\leq W(S) \leq 2 \cdot|E_L(S)|$.
\end{lemma}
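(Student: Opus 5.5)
The plan is to expand $W(S)$ as a sum over vulnerable edges, exactly as in the proof of Lemma~\ref{lemma:dvv_inequality}. By Algorithm~\ref{alg:def4Algorithm}, the weight of a vertex is $w(\upsilon)=|E(\upsilon)\cap E_L|$. Vertices with no incident vulnerable edge never appear in the weighted instance; we may regard them as having weight $0$, which is consistent with the formula. Hence
\[
W(S)=\sum_{\upsilon\in S}|E(\upsilon)\cap E_L| .
\]

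Next we double count pairs $(\upsilon,e)$ with $\upsilon\in S$, $e\in E_L$, and $\upsilon$ an endpoint of $e$. Grouping these pairs by the edge gives
\[
W(S)=\sum_{e\in E_L}|e\cap S| .
\]
An edge $e$ contributes to this sum only when $|e\cap S|\geq 1$, that is, exactly when $e\in E_L(S)$. So the sum can be restricted to $e\in E_L(S)$.

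Since every edge has exactly two endpoints, we have $1\leq |e\cap S|\leq 2$ for each $e\in E_L(S)$. Summing over $E_L(S)$ gives
\[
|E_L(S)|\leq W(S)\leq 2|E_L(S)| .
\]

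The only delicate point is the treatment of vertices excluded from $V'$ (those placed in $S_0$, or removed as isolated). The lemma is stated for all $S\subseteq V$, so $w$ must be defined on all of $V$. I would state explicitly that $w(\upsilon)=|E(\upsilon)\cap E_L|$ for every $\upsilon$, with value $0$ when no vulnerable edge is incident. Such vertices then contribute nothing to either side, and the argument above goes through unchanged. Apart from this, the double counting is routine and I expect no real obstacle. Loops are not considered, since $G$ is a simple graph.
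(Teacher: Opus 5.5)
Your proposal is correct and follows the paper's proof: both rewrite $W(S)$ as a sum over vulnerable edges of the number of endpoints in $S$ (the paper's $\delta_S(e)$), restrict the sum to $E_L(S)$, and bound each term between $1$ and $2$. Your convention $w(\upsilon)=0$ for vertices with no incident vulnerable edge is a clarification the paper leaves implicit; it is also what makes $W(S_0)=0$ in the subsequent $4$-approximation proof.
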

\begin{proof}
	By Algorithm~\ref{alg:def4Algorithm}, every weight of a vertex is equal to the number of vulnerable edges incident to it. So, every vulnerable edge adds one point to each neighboring vertex's weight. So:
    
	\[
	W(S)=\sum_{v \in S}w(v)=\sum_{e \in E_L(S)}\delta_S(e)
	\]
    
\noindent where, for $e=(\upsilon,u)$, $\delta_S(e) = 2$ when $\upsilon,u \in S$; $\delta_S(e) = 0$ when $\upsilon,u \notin S$; and $\delta_S(e) = 1$ otherwise.
By definition of $\delta_S(e)$, for every $e \in E_L(S)$, $1 \leq \delta_S(e) \leq 2$. So,
	\[
	\sum_{e \in E_L(S)}1\leq\sum_{e \in E_L(S)}\delta_S(e)\leq  \sum_{e \in E_L(S)}2
	\Rightarrow
	|E_L(S)|\leq W(S) \leq 2 \cdot|E_L(S)|
	\]
\end{proof}

\begin{theorem} Algorithm~\ref{alg:def4Algorithm} provides a $4$-approximation for $\mathit{VCVE}$.
\end{theorem}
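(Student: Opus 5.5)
The argument mirrors the proof of Theorem~\ref{thm:dvvProof}: we sandwich the objective $|E_L(\cdot)|$ between $W(\cdot)$ and $2W(\cdot)$ using Lemma~\ref{lemma:setCover}, and lose a further factor $2$ from the subroutine, namely the classical $2$-approximation for $WVC$ of \cite{BARYEHUDA1981198}.

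\textbf{Step 1: the vertices of $S_0$ are free.} Every $\upsilon \in S_0$ has no incident vulnerable edge. So adding $S_0$ to any feasible solution keeps it feasible and does not change $E_L(\cdot)$. Hence we may fix an optimal solution $S^*$ with $S_0\subseteq S^*$. We also have $W(S_0)=0$ and $E_L(S\cup S_0)=E_L(S)$ for every $S$.

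\textbf{Step 2: the output is feasible.} Take a non-vulnerable edge $e\in E_N$. If an endpoint of $e$ lies in $S_0$, then $e$ is covered. Otherwise both endpoints survive in $V'$, so $e\in E'$, and $e$ is covered by $S_{WVC}$ because it is a vertex cover of $G'$. Running $WVC$ componentwise does not affect this.

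\textbf{Step 3: the optimum restricted to $G'$ is a cover of $G'$.} The set $S^*\cap V'$ is a vertex cover of $G'$. Indeed, every edge of $E'$ lies in $E_N$, and neither of its endpoints is in $S_0$. So $S^*$ must cover it through a vertex of $V'$. Therefore
\[
W(S^*_{WVC}) \le W(S^*\cap V') \le W(S^*),
\]
where $S^*_{WVC}$ denotes an optimal $WVC$ solution on $G'$. The $2$-approximation then gives $W(S_{WVC})\le 2W(S^*)$.

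\textbf{Step 4: chain the bounds.} We have $W(S)=W(S_{WVC})$, since vertices of $S_0$ have weight zero. Lemma~\ref{lemma:setCover} gives $|E_L(S)|\le W(S)$ and $W(S^*)\le 2|E_L(S^*)|$. Combining,
\[
|E_L(S)|\le W(S)\le 2W(S^*)\le 4|E_L(S^*)|,
\]
which is the claimed ratio.

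\textbf{Main obstacle.} The delicate point is that the weights $w$ are defined only on $V'$, while Lemma~\ref{lemma:setCover} is stated for arbitrary $S\subseteq V$. I would handle this by extending $w$ to be $0$ on $S_0$; this is consistent with $|E(\upsilon)\cap E_L|=0$ there. With that convention the lemma applies to $S$ and $S^*$ directly.

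A second point to check is the isolated vertices dropped from $V'$. These have weight at least $1$ but are not needed for any non-vulnerable edge. Removing them from $S^*\cap V'$ only decreases $W$, so Step 3 still holds.
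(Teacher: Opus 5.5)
Your proof is correct and follows the paper's argument. You fix an optimal $S^*\supseteq S_0$, bound $W(S_{WVC})\le 2W(S^*)$ via the $2$-approximation for $WVC$ on $G'$, and sandwich with Lemma~\ref{lemma:setCover} to get $|E_L(S)|\le W(S)\le 2W(S^*)\le 4|E_L(S^*)|$, where your use of $S^*\cap V'$ and $w\equiv 0$ on $S_0$ is just a more careful version of the paper's $S^*\setminus S_0$ step (which glosses over the dropped isolated vertices).
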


\begin{proof}
	
	There exists an optimal solution $S^*$ such that $S_0 \subseteq S^*$, where $S_0$ the subset of vertices with no vulnerable incident edges.
	Let $G'=(V',E')$ be the graph that Algorithm~\ref{alg:def4Algorithm} defines and $S_{WVC}^*$ the optimal solution for $WVC$. By \cite{BARYEHUDA1981198,karakostas2009better} we obtain a feasible solution $S_{WVC}$ for $WVC$, such that $W(S_{WVC})\leq 2 \cdot W(S_{WVC}^*)$.
 
 By definition, $S^* \setminus S_0$ is a feasible solution for $WVC$ on $G'=(V',E')$ graph and so $W(S_{WVC}^*)\leq W(S^*\setminus S_0)$. We have:
  \[
  W(S_{WVC}) \leq 2 \cdot W(S^* \setminus S_0) \Rightarrow
  \]
\[
   W(S_{WVC}) + W(S_0) \leq 2 \cdot (W(S^* \setminus S_0) + W(S_0)) \Rightarrow
   \]
   \[
   W(S) \leq 2 \cdot W(S^*)
  \]

 By Lemma~\ref{lemma:setCover}, $|E_L(S)|\leq W(S)$ and $W(S^*)\leq 2\cdot |E_L(S^*)|$. So: 
	\[ |E_L(S)|\leq W(S) \leq 2 \cdot W(S^*) \leq 4 \cdot |E_L(S^*)|
  \]
\end{proof}

Although combinatorial reduction provides a constant-factor approximation and highlights the structural properties of the problem, a better approximation ratio can be achieved using linear programming. In the remainder of this section, we formulate the problem as an Integer Linear Program (ILP). We show that applying a rounding technique to its Linear Programming (LP) relaxation improves upon the previous approximation guarantee. Therefore, while the initial reduction provides a valid theoretical baseline, the LP rounding approach ultimately yields a tighter approximation bound. Without loss of generality, assume that a fixed ordering of the vertices is defined, so that each edge is counted exactly once. The ILP formulation for the Minimum $VCVE$ problem is defined as follows:

\setcounter{equation}{0}
\begin{align}
    \text{minimize}   \quad & \sum\limits_{\substack{i<j \\ (v_i, v_j) \in E_L}} z_{ij} \label{lp:objective} \\
    \text{subject to} \quad & x_i+x_j\geq 1, && \forall (v_i,v_j) \in E_N, \ i<j \label{lp:constraint1} \\
                            & z_{ij}\geq x_i \ , \ \ z_{ij}\geq x_j, && \forall (v_i, v_j) \in E_L, \ i<j\label{lp:constraint2}\\
                             & x_i \in \{0,1\}, && \forall v_i \in V \\
 & z_{ij}\in \{0,1\}, && \forall (v_i, v_j) \in E_L, \ i<j
\end{align}
\\The LP relaxation of the above formulation is obtained by replacing the integrality constraints (4) and (5) with the continuous relaxations $x_i \in [0,1]$ and $z_{ij}\in [0,1]$, respectively.

Let $(x^*,z^*)$ be the optimal of the ILP and $(\bar{x},\bar{z})$ be the optimal fractional solution of the LP relaxation. In Algorithm~\ref{alg:lp_rounding}, we construct an integer solution $(\hat{x},\hat{z})$ by using the following deterministic rounding scheme: a) for every vertex $v_i$, if $\bar{x}_i \geq 1/2$, we set $\hat{x}_i=1$; otherwise, we set $\hat{x}_i=0$, and b) for every edge $(v_i,v_j)$, if $\hat{x}_i=1$ or $\hat{x}_j=1$, we set $\hat{z}_{ij}=1$; otherwise, we set $\hat{z}_{ij}=0$.

\begin{lemma}\label{lemma:lp_feasible}
The rounding scheme provides a feasible solution for the $VCVE$ problem.
\end{lemma}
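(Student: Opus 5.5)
We verify directly that the rounded pair $(\hat{x},\hat{z})$ satisfies every constraint of the ILP, namely (\ref{lp:constraint1}), (\ref{lp:constraint2}) and integrality. The set $S=\{v_i:\hat{x}_i=1\}$ is then a feasible vertex set for $\mathit{VCVE}$. Its objective value $\sum \hat{z}_{ij}$ equals exactly $|E_L(S)|$, because $\hat{z}_{ij}=1$ precisely when $S$ covers the vulnerable edge $(v_i,v_j)$.

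\textbf{Step 1 (integrality).} By construction every $\hat{x}_i$ and every $\hat{z}_{ij}$ lies in $\{0,1\}$, so constraints (4) and (5) hold trivially.

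\textbf{Step 2 (non-vulnerable edges are covered).} Fix $(v_i,v_j)\in E_N$. The LP solution satisfies $\bar{x}_i+\bar{x}_j\geq 1$, so $\max\{\bar{x}_i,\bar{x}_j\}\geq 1/2$. The rounding threshold is $1/2$ and includes equality, so at least one of $\hat{x}_i,\hat{x}_j$ equals $1$. Hence $\hat{x}_i+\hat{x}_j\geq 1$, and every $E_N$ edge is incident to a vertex of $S$. This is the only step with real content, and it hinges on the non-strict comparison $\bar{x}_i\geq 1/2$: with a strict threshold, the case $\bar{x}_i=\bar{x}_j=1/2$ would fail.

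\textbf{Step 3 (linking constraints).} Fix $(v_i,v_j)\in E_L$. If $\hat{x}_i=1$ or $\hat{x}_j=1$, the rule sets $\hat{z}_{ij}=1\geq \hat{x}_i,\hat{x}_j$. Otherwise both $\hat{x}$ values are $0$ and $\hat{z}_{ij}=0$, so $\hat{z}_{ij}\geq\hat{x}_i$ and $\hat{z}_{ij}\geq\hat{x}_j$ again hold.

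\textbf{Conclusion.} The rounded solution is a feasible integer solution of the ILP, so the corresponding $S$ covers all of $E_N$ and is a feasible $\mathit{VCVE}$ solution. The objective is exactly the number of vulnerable edges it covers, since $\hat{z}_{ij}$ is $1$ exactly when $(v_i,v_j)\in E_L(S)$. There is no substantial obstacle beyond Step 2's threshold argument.
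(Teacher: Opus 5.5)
Your proof is correct and uses the same argument as the paper: constraint \eqref{lp:constraint1} forces $\max\{\bar{x}_i,\bar{x}_j\}\geq 1/2$ on every edge of $E_N$, so the non-strict threshold selects at least one endpoint. You also verify integrality and the linking constraints \eqref{lp:constraint2}, which the paper leaves implicit; this is correct, and it makes explicit the identity $\sum\hat{z}_{ij}=|E_L(S)|$ that the paper relies on in Theorem~\ref{thm:vcve_2approx}.
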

\begin{proof}
Let $(\bar{x},\bar{z})$ be the optimal fractional solution to the LP relaxation. According to constraint \eqref{lp:constraint1}, for every non-vulnerable edge $e = (u,v) \in E_N$, it holds that $\bar{x}_u + \bar{x}_v \geq 1$. This implies that at least one of the fractional variables, $\bar{x}_u$ or $\bar{x}_v$, must be greater than or equal to $1/2$. Consequently, the rounding scheme will set at least one of these variables to $1$. Thus, every non-vulnerable edge is covered by at least one selected vertex, ensuring the feasibility of the resulting integer solution.
\end{proof}

\begin{algorithm}[H]
	\SetKwInOut{Input}{Input}
	\SetKwInOut{Output}{Output}
	\DontPrintSemicolon

	\Input{A graph $G=(V, E)$ where edges are partitioned into non-vulnerable $E_N$ and vulnerable $E_L$}
	\Output{A feasible solution $S$ for $\mathit{VCVE}$}

	$(\bar{x}, \bar{z}) \leftarrow$ optimal fractional solution of LP relaxation\;
	$S \leftarrow \emptyset$\;
	\ForEach{$v_i \in V$}{
		\uIf{$\bar{x}_i \geq 1/2$}{
			$\hat{x}_i \leftarrow 1$\;
			$S \leftarrow S \cup \{v_i\}$\;
		}\Else{
			$\hat{x}_i \leftarrow 0$\;
		}
	}
	\ForEach{$(v_i, v_j) \in E_L$, $i < j$}{
		\uIf{$\hat{x}_i = 1$ \textbf{or} $\hat{x}_j = 1$}{
			$\hat{z}_{ij} \leftarrow 1$\;
		}\Else{
			$\hat{z}_{ij} \leftarrow 0$\;
		}
	}
	\textbf{return} $S$

	\caption{LP-Rounding for $\mathit{VCVE}$}
	\label{alg:lp_rounding}
\end{algorithm}

\begin{theorem}\label{thm:vcve_2approx}
The proposed Algorithm~\ref{alg:lp_rounding}  achieves a $2$-approximation for the $VCVE$.
\end{theorem}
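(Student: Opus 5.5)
The plan is the standard LP-rounding comparison chain. Denote by $OPT$ the optimal value of the ILP, which equals the optimum of $\mathit{VCVE}$, and by $LP$ the optimal value of the relaxation, so $LP=\sum \bar z_{ij}\le OPT$. Feasibility of the rounded solution is given by Lemma~\ref{lemma:lp_feasible}. The value of the returned set is exactly $\sum \hat z_{ij}=|E_L(S)|$, since $\hat z_{ij}=1$ precisely when the vulnerable edge $(v_i,v_j)$ is covered by $S$. So it suffices to show $\hat z_{ij}\le 2\bar z_{ij}$ for every vulnerable edge.

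The key step is this per-edge inequality. Fix $(v_i,v_j)\in E_L$.
\begin{itemize}
\item If $\hat z_{ij}=0$, the inequality is trivial.
\item If $\hat z_{ij}=1$, then by the rounding rule $\hat x_i=1$ or $\hat x_j=1$, so $\bar x_i\ge 1/2$ or $\bar x_j\ge 1/2$. Constraint \eqref{lp:constraint2} gives $\bar z_{ij}\ge \max\{\bar x_i,\bar x_j\}\ge 1/2$, hence $\hat z_{ij}=1\le 2\bar z_{ij}$.
\end{itemize}
Summing over $E_L$ yields $|E_L(S)|=\sum\hat z_{ij}\le 2\sum \bar z_{ij}=2\,LP\le 2\,OPT$.

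Two checks remain. First, the ILP exactly models $\mathit{VCVE}$. In an optimal integral solution $z_{ij}=\max\{x_i,x_j\}$, since minimization pushes $z$ down to its lower bounds. Hence the objective counts exactly the covered vulnerable edges, and constraint \eqref{lp:constraint1} enforces coverage of $E_N$. Second, the LP is solvable in polynomial time, which makes the algorithm polynomial.

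I expect no real obstacle. The only subtle point is making sure the comparison uses $\bar z$ rather than $\bar x$: a vulnerable edge may be covered via either endpoint, and constraint \eqref{lp:constraint2} is what lets a single fractional variable $\bar z_{ij}$ absorb whichever endpoint triggered the rounding.
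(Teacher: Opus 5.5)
Your proposal is correct and follows essentially the same route as the paper: feasibility from Lemma~\ref{lemma:lp_feasible}, the per-edge bound $\hat z_{ij}\le 2\bar z_{ij}$, then summing over $E_L$ and using $OPT_{LP}\le OPT$. The only difference is that you split on the value of $\hat z_{ij}$ and explicitly invoke $\bar z_{ij}\ge\max\{\bar x_i,\bar x_j\}$, whereas the paper splits on whether $\bar z_{ij}\ge 1/2$ and uses that constraint only implicitly when it concludes $\hat z_{ij}=0$ in the case $\bar z_{ij}<1/2$.
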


\begin{proof}
Let $S^*$ be the optimal solution of the $VCVE$, $S$ the solution of Algorithm~\ref{alg:lp_rounding} , $(\hat{x},\hat{z})$ be the integer solution produced by the rounding scheme, and let $(\bar{x},\bar{z})$ be the optimal fractional solution of the LP relaxation. By Lemma \ref{lemma:lp_feasible}, we have established that $(\hat{x},\hat{z})$ is a feasible solution for the $VCVE$ problem, meaning it successfully covers all non-vulnerable edges $E_N$.

Let $OPT_{LP}$ denote the objective value of the optimal fractional solution and $OPT$ the objective value of the optimal integer solution. It holds $OPT_{LP} \le OPT=|E_L(S^*)|$. We have two cases:
\\\\
\begin{itemize}
    \item Let $\bar{z}_{ij} \geq 1/2$. It holds $\hat{z}_{ij} = 1 \leq 2\bar{z}_{ij}$.
    \item Let $\bar{z}_{ij} < 1/2$. It holds $\hat{z}_{ij} = 0 \leq 2\bar{z}_{ij}$.
\end{itemize}
We have:
\[
|E_L(S)|=\sum\limits_{\substack{i<j \\ (v_i, v_j) \in E_L}} \hat{z}_{ij}\leq \sum\limits_{\substack{i<j \\ (v_i, v_j) \in E_L}} 2\bar{z}_{ij} = 2\sum\limits_{\substack{i<j \\ (v_i, v_j) \in E_L}} \bar{z}_{ij} = 2OPT_{LP}\leq 2OPT=2 |E_L(S^*)|
\]
\end{proof}

\section{Conclusion}
\label{sec:conclusion} 
This work addresses the challenge of socially responsible influence maximization by introducing new variants of domination and coverage problems. We defined the $k$-Vertex Maximum Domination Ratio with Vulnerable Vertices problem and provided an approximation algorithm. We followed a path from the budgeted $k$-Vertex Maximum Domination Ratio with Vulnerable Vertices to the unbudgeted Vertex Maximum Domination Ratio with Vulnerable Vertices and then to Maximum Domination with Bounded number of Vulnerable Vertices problem, in order to achieve the approximation ratio $\frac{1}{\Delta+1}\cdot (1-e^{-\frac{1}{\Delta}}) \cdot (1-\frac{1}{e})\cdot \frac{k}{n}$. For the class of bounded degree graphs, this result further improves to $O(k/n)$. We also presented the Vertex Cover problem with vulnerable edges and the Dominating Set with vulnerable vertices, for which we  propose polynomial-time approximation algorithms.

Several questions remain open. For $k\textit{-}Max \ \mathit{DRVV}$, it is unclear whether a constant-factor approximation is achievable beyond specific graph classes or for small values of $k$. Tight inapproximability bounds for $\mathit{DVV}$ and $\mathit{DRVV}$ have yet to be established. Connected variants of $\mathit{DRVV}$ and $\mathit{DSV}$, as well as their behavior on structured graph families such as planar graphs or graphs of bounded treewidth, are also natural directions for future work. Finally, the general-graph bounds established in this paper may be loose for structured graph classes, and investigating whether stronger approximation ratios or even exact polynomial-time algorithms can be obtained for planar, bipartite, chordal, and bounded-treewidth graphs is a promising avenue.

\nocite{*}
\bibliographystyle{abbrvnat}
\bibliography{sample-dmtcs}
\label{sec:biblio}

\end{document}